\newtheorem{thm}{Theorem}
\newtheorem{lem}{Lemma}
\newtheorem{defn}{Definition}
\newtheorem{prop}{Proposition}
\newtheorem{rem}{Remark}
\theoremstyle{definition}
\newtheorem{defn2}{Example}
\newenvironment{defnbis}[1]
  {%
   \addtocounter{defn}{-1}%
   \begin{defn}}
  {\end{defn}}
\newcommand{\Complex}{\mathbb{C}}
\newcommand{\Rational}{\mathbb{Q}}
\newcommand{\tree}{%
  \vcenter{\hbox{\tikz[node distance=2.5ex]{%
     \draw[thick] (5,-0.20) -- (5,0) -- (4.85,0.10) -- (5,0) -- (5.15,0.10);
}}}}
\newcommand{\bridge}{\raisebox{1.75mm}{$\frown$}\hspace{-4.5mm}\vee}
\newcommand{\treetwo}{%
	\vcenter{\hbox{\tikz[node distance=2.5ex]{%
				\draw(1,0.75) -- (1,1.25);\draw (1,0.75) node{ $\bullet$};\draw (1,1.25) node{ $\bullet$};
}}}}
\newcommand{\treethree}{%
	\vcenter{\hbox{\tikz[node distance=2.5ex]{%
					\draw[thick](1,-1)--(1,0) -- (1,1);\draw (1,-1) node{ $\bullet$};\draw (1,0) node{ $\bullet$};\draw (1,1) node{ $\bullet$};
}}}}
\newcommand{\treethreetwo}{%
	\vcenter{\hbox{\tikz[node distance=2.5ex]{%
				\draw[thick](-0.7,0)--(0,1) -- (0.7,0);\draw (-0.7,0) node{ $\bullet$};\draw (0,1) node{ $\bullet$};\draw (0.7,0) node{ $\bullet$};
}}}}
\newcommand{\oneloop}{%
	\vcenter{\hbox{\tikz[node distance=2.5ex]{%
				\draw[thick] (5,-0.20) -- (5,0) -- (4.85,0.10) -- (5,0) -- (5.15,0.10) ;\draw[thick] (5.15,0.13) arc (45:135:0.2cm);
}}}}
\newcommand{\onetwo}{%
	\vcenter{\hbox{\tikz[node distance=2.5ex]{%
				\draw[thick] (6.25,0.75) -- (6.25,1) -- (6,1.25) -- (6.125,1.125)--(6.25,1.25)--(6.125,1.125) -- (6.25,1) -- (6.5,1.25);
}}}}
\newcommand{\twoone}{%
	\vcenter{\hbox{\tikz[node distance=2.5ex]{%
				\draw[thick] (6.25,0.75) -- (6.25,1) -- (6,1.25) -- (6.125,1.125) -- (6.25,1) -- (6.375,1.125)--(6.25,1.25)--(6.375,1.125)--(6.5,1.25);
}}}}
\newcommand{\onetwoloopone}{%
\vcenter{\hbox{\tikz[node distance=2.5ex]{%
\draw[thick] (6.25,0.75) -- (6.25,1) -- (6,1.25) -- (6.125,1.125)--(6.25,1.25)--(6.125,1.125) -- (6.25,1) -- (6.5,1.25);\draw[thick] (6.25,1.3) arc (45:135:0.18cm);
}}}}
\newcommand{\onetwolooptwo}{%
	\vcenter{\hbox{\tikz[node distance=2.5ex]{%
				\draw[thick] (6.25,0.75) -- (6.25,1) -- (6,1.25) -- (6.125,1.125)--(6.25,1.25)--(6.125,1.125) -- (6.25,1) -- (6.5,1.25);\draw[thick] (6.5,1.3) arc (45:135:0.15cm);
}}}}
\newcommand{\twooneloopone}{%
	\vcenter{\hbox{\tikz[node distance=2.5ex]{%
				\draw[thick] (6.25,0.75) -- (6.25,1) -- (6,1.25) -- (6.125,1.125) -- (6.25,1) -- (6.375,1.125)--(6.25,1.25)--(6.375,1.125)--(6.5,1.25);\draw[thick] (6.23,1.3) arc (45:135:0.17cm);
}}}}
\newcommand{\twoonelooptwo}{%
	\vcenter{\hbox{\tikz[node distance=2.5ex]{%
				\draw[thick] (6.25,0.75) -- (6.25,1) -- (6,1.25) -- (6.125,1.125) -- (6.25,1) -- (6.375,1.125)--(6.25,1.25)--(6.375,1.125)--(6.5,1.25);\draw[thick] (6.5,1.3) arc (45:135:0.15cm);
}}}}
\begin{document}
\title{Hopf Algebras and Topological Recursion}
\author{Jo\~{a}o N. Esteves}
\address{CAMGSD, Departamento de Matem\'{a}tica, Instituto Superior T\'{e}cnico, Av. Rovisco Pais 1, 1049-001 Lisboa, Portugal}
\email{joao.n.esteves@tecnico.ulisboa.pt}
\thanks{The author was supported by Funda\c{c}\~ao para a Ci\^{e}ncia e a Tecnologia through the grant SFRH/BPD/77123/2011. He greatly thanks the organizers of the 2016 von Neumann Symposium for the opportunity of giving this talk.}
\keywords{Hopf Algebras, Topological Recursion, Matrix Models}
\begin{abstract}
We first review our previous work where we considered a model for topological recursion based on the Hopf Algebra of planar binary trees of Loday and Ronco and showed that extending this Hopf Algebra by identifying pairs of nearest neighbor leaves and thus producing graphs with loops we obtain the full recursion formula of Eynard and Orantin. Then we discuss the algebraic structure of the spaces of correlation functions in $g=0$ and in $g>0$. By taking a classical and a quantum product respectively we endow both spaces with a ring structure. 

This is an extended version of the contributed talk given at the 2016 von Neumann Symposium on Topological Recursion and its Influence in Analysis, Geometry and Topology, from 4 to 8 July 2016 at Hilton Charlotte University Place, USA.
\end{abstract}

\maketitle
\tableofcontents

\section{Introduction}
 In the paper \cite{1751-8121-48-44-445205} we showed how the Hopf Algebra of planar binary trees of Loday and Ronco \cite{MR1654173} can be seen as a sort of representation of the space generated by correlation functions that obey the Eynard-Orantin recursion formula. These correlation functions are graded by the Euler characteristic and we can consider for each degree the vector space over a field $k$ of characteristic 0 generated by them and then take the direct sum of these vector spaces for all degrees. Here we will discus, among other things and after reviewing the work in \cite{1751-8121-48-44-445205} if this vector space can be given additional algebraic structures. 
 
 First we consider planar binary trees of order $n$, that is with $n$ vertices's and $n+1$ leaves, as a representation of genus $g=0$ correlation function $W_{k}^0(p,p_1,\dots, p_{k-1})$ of Euler characteristic $\chi=2-2g-k$ equal to $-n$. Here the Euler characteristic is the one of Riemann or topological surfaces of genus $g$ and $k$ punctures or borders to which the correlation functions $W_k^g(p,p_1,\dots, p_{k-1})$ are usually related in some concrete problems. For $g=0$ we label the root with $p$ and the $n+1$ leaves with the $p_1,\dots,p_{n+1}$ variables. Then by connecting the nearest neighbors leaves with a single edge and reducing the number of pairs of labels in the same way as increasing the genus we obtain graphs with loops that we see as a representation of higher genus correlation functions with the same Euler characteristic. 
 
 As an example take $W^2_1(p)$ which has $\chi=-3$. Its underline generating trees are planar binary trees of order 3 which are also models for $W_5^0(p,p_1,p_2,p_3,p_4)$:
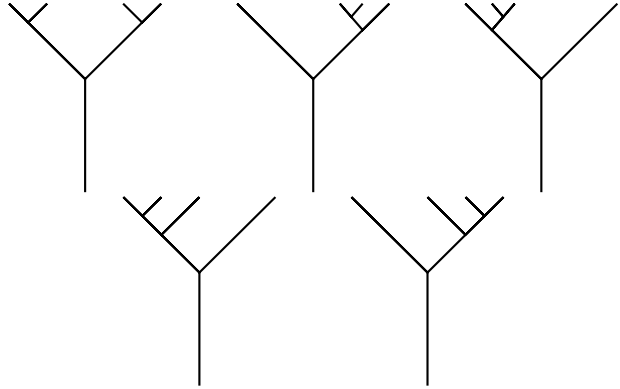
\begin{figure}
  \begin{tikzpicture}\label{fig:pbt5a}
  \draw[thick](1,-0.5) -- (1,1) -- (0.25,1.75) -- (0.5,2)  -- (0.25,1.75) -- (0,2) -- (1,1) -- (2,2) -- (1.75,1.75) -- (1.5,2);\draw[thick] (4,-0.5) -- (4,1) -- (3,2) -- (4,1) -- (5,2) -- (4.65,1.65) -- (4.35,2)-- (4.5,1.825)--(4.65,2);\draw[thick] (7,-0.5) -- (7,1) -- (6,2) -- (6.35,1.65)--(6.65,2)--(6.5,1.825)-- (6.35,2)--(6.5,1.825)--(6.35,1.65)-- (7,1) -- (8,2);
  \end{tikzpicture}
  \begin{tikzpicture}\label{fig:pbt5b}
  \draw[thick] (10,-0.5) -- (10,1) -- (9,2) -- (9.25,1.75)--(9.5,2)--(9.25,1.75) --  (9.5,1.5) -- (10,2) -- (9.5,1.5) -- (10,1) -- (11,2);
  \draw[thick] (13,-0.5) -- (13,1) -- (12,2) -- (13,1) -- (14,2)--(13.75,1.75)--(13.5,2)--(13.75,1.75) --  (13.5,1.5) -- (13,2) -- (13.5,1.5) ;
  \end{tikzpicture}
  \caption{Planar binary trees of order 3 as generators of the correlation functions $W^0_5, W^1_3$ and $W^2_1$ with $\chi=-3$. Reprinted with permission from \cite{1751-8121-48-44-445205}.}\label{fig:pbt5}
\end{figure}
 Identifying pairs of nearest neighbor leaves in the left and right branches independently we get the second term of topological recursion. Identifying pairs of leaves each taken from the left and the right branches gives the first term. Note that in this case not every planar binary tree of order 3 gives $W_1^2$. In fact the first tree of fig. \ref{fig:pbt5} does not give a genus 2 correlation function by identifying the nearest neighbor leaves in opposite branches.
 \section{The topological recursion of Eynard and Orantin}
 The topological recursion formula of Eynard and Orantin has its origin in Matrix Models, for general reviews see for instance \cite{DiFrancesco:1993nw,MR2346575}. In the hermitian 1-matrix form of the theory the purpose is to compute connected correlation functions $W_{k+1}$ depending on a set of variables $p, p_1,\dots,p_k$
 \begin{equation}
 W_{k+1}(p,p_1,\dots p_k)=\Braket{\text{Tr}\frac{1}{p-M}\text{Tr}\frac{1}{p_1-M}\dots \text{Tr}\frac{1}{p_k-M}}_c
 \end{equation}
 starting with $W_1(p)$ and $W_2(p,p_1)$.
  These functions which are solutions of the so-called loop equations are only well defined over Riemann surfaces because in $\Complex$ they are multi-valued. They admit an expansion on the order $N$ of the random matrix $M$, with components $W_{k+1}^g(p,p_1,\dots p_k)$ related to a definite genus. We will not be concerned here with the actual computation of correlation functions in specific models.
  
  Let $K=(p_1,\dots,p_k)$ be a vector of variables. For instance in concrete cases these can be coordinates of punctures on Riemann surfaces, labels of borders on topological surfaces or variables in Matrix Models, but we just leave them as labels of leaves of planar binary trees or of graphs obtained from planar binary trees. We assign the label $p$ to the root of a tree or of a graph with loops obtained from a tree. The topological recursion formula is
 \begin{align}\label{toprec}
 &W_{k+1}^g(p,K)=\sum_{\text{branch points }\alpha}\text{Res}_{p\rightarrow \alpha}K_p(q,\bar{q})\notag\\
 &\left(W^{g-1}_{k+2}(q,\bar{q},K)+\sum_{L\cup M=K,h=0}^g W^h_{|L|+1}(q,L)W^{g-h}_{|M|+1}(\bar{q},M)\right)
 \end{align}
 where the sum is restricted to terms with Euler characteristic equal or smaller than 0. For instance if $h=0$ then $|L|\ge 1$. 
 For a very clear exposition about this setup from the point of view of Algebraic Geometry see for instance \cite{MR3087960} but some comments are in order. The branch points are the ones from a meromorphic function $x$ defined on a so called spectral curve $\mathcal{E}(x,y)=0$. The recursion kernel $K_p(q,\bar{q})$ is, roughly speaking, a meromorphic (1,1) tensor that depends on a regular point $p$ in the neighbourhood of a branch point and on $q$ and its conjugated point $\bar{q}$ for which $x(q)=x(\bar{q})$ and $y(q)=-y(\bar{q})$. In fact it can be computed from $W_1^0(p)$ and $W_2^0(p,p)$ which are symmetric differentials of order one and two respectively. Actually, all $W_k^g$ are meromorphic symmetric differentials but we will continue to refer to them as correlation functions. Since our approach will be purely algebraic and in order to soften the notation we will not explicitly mention the sum of the residues over the branch points when referring to this formula.
 \section{The Loday-Ronco Hopf Algebra of planar binary trees}
 We collect here some important facts of the Loday-Ronco Hopf algebra. Details and proofs can be found in \cite{MR2194965,MR1654173}.
 Let $S_n$ be the symmetric group of order $n$ with the usual product $\rho\cdot\sigma$ given by the composition of permutations. When necessary we denote a permutation $\rho$ by its image $(\rho(1)\rho(2)\dots\rho(n))$. Recall that a shuffle $\rho(p,q)$ of type $(p,q)$ in $S_n$ is a permutation such that $\rho(1)<\rho(2)<\dots <\rho(p)$ and $\rho(p+1)<\rho(p+2)<\dots <\rho(p+q)$. For instance the shuffles of type $(1,2)$ in $S_3$ are $(123),(213)$ and $(312)$. We denote the set  of $(p,q)$ shuffles by $S(p,q)$.
 Take 
 \begin{equation}
 k[S^\infty]=\oplus_{n=0}^{\infty}k[S_n]
 \end{equation}
 with $S_0$ identified with the empty permutation. $k[S^\infty]$ is a vector space over a field $k$ of characteristic $0$ generated by linear combinations of permutations. It is graded by the order of permutations and $k[S_0]$ which contains the empty permutation is identified with the field $k$. For two permutations $\rho\in S_p$ and $\sigma\in S_q$ there is a natural product on $S^\infty$ denoted by $\rho\times\sigma$ which is a permutation on $S_{p+q}$ given by letting $\rho$ acting on the first $p$ variables and $\sigma$ acting on the last $q$ variables.

 There is a unique decomposition of any permutation $\sigma\in S_n$ in two permutations $\sigma_i\in S_i$ and $\sigma'_{n-i}\in S_{n-i}$ for each $i$ such that
 \begin{equation}
 \sigma =(\sigma_i\times\sigma'_{n-i})\cdot w^{-1}
 \end{equation}
 where $w$ is a shuffle of type $(i,n-i)$.
 With the $\ast$ product
 \begin{equation}
 \rho\ast \sigma=\sum_{\alpha_{n,m}\in S_{(n,m)}}\alpha_{n,m}\cdot\left(\rho\times\sigma\right)
 \end{equation}
 and the co-product
 \begin{equation}\label{eq:coproductperm}
  \Delta\sigma=\sum \sigma_{i}\otimes\sigma^{'}_{n-i}
  \end{equation}
  $k[S^{\infty}]$ becomes a bi-algebra and since it is graded and connected it is automatically a Hopf Algebra.

 A planar binary tree is a graph with no loops embedded in the plane with only trivalent vertices. In every planar binary tree there are paths that start on a special edge called the root and end on the terminal edges called leaves. The leaves can be left or right oriented. The order $|t|$ of a planar binary tree $t$ is the number of its vertices and on each planar binary tree of order $n$ there are $n+1$ leaves that usually are numbered from 0 to $n$ from left to right. It is frequent to visualize planar binary trees from the bottom to the top, with the root as its lowest vertical edge and the leaves as the highest edges, oriented SW-NE or SE-NW. We will denote the set of planar binary trees of order $n$ by $Y^n$ and by $k[Y^\infty]$ the vector space over $k$ generated by planar binary trees of all orders.
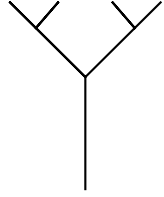
\begin{figure}
   \begin{tikzpicture}
  \draw[thick] (1,-0.5) -- (1,1) -- (0,2) -- (0.35,1.65)--(0.65,2)--(0.35,1.65)-- (1,1) --(1.65,1.65)--(1.35,2)--(1.65,1.65) -- (2,2);
     \end{tikzpicture}
  \caption{A planar binary tree of order 3. Reprinted with permission from \cite{1751-8121-48-44-445205}.}\label{fig:pbt3-2}
\end{figure}
 Additionally a planar binary tree with levels is a planar binary tree such that on each horizontal line there is at most one vertex.
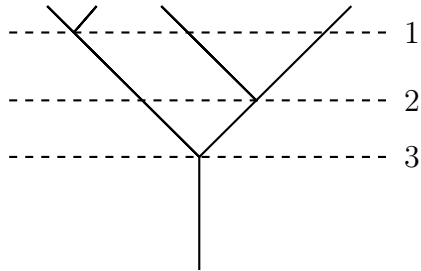
\begin{figure}
   \begin{tikzpicture}
     \draw[thick] (4,-0.5) -- (4,1) -- (2,3) -- (2.35,2.65)--(2.65,3)--(2.35,2.65)-- (4,1) --(4.75,1.75)--(3.5,3)-- (4.75,1.75)--(6,3); \draw[thick,dashed](1.5,1.75)--(6.5,1.75);\draw[thick,dashed](1.5,2.65)--(6.5,2.65);
     \draw (6.8,2.65) node{1}; \draw (6.8,1.75) node{2};\draw[thick,dashed](1.5,1)--(6.5,1);\draw (6.8,1) node{3};
    \end{tikzpicture}
  \caption{Planar binary tree with levels that is the image of $\mathbf{(132)}$. Reprinted with permission from \cite{1751-8121-48-44-445205}.}\label{fig:pbtlev3}
\end{figure}
 It is clear that reading the vertices from left to right and from top to bottom it is possible to assign a permutation of order $n$ to a planar binary tree with levels and that this assignment is unique. For example in fig. \ref{fig:pbtlev3} the tree corresponds to the permutation $(132)$. In this way it is completely equivalent to consider the Hopf algebra $k[S^\infty]$ or the Hopf algebra of planar binary trees with levels because they are isomorphic. However Loday and Ronco show in \cite{MR1654173} that the $\ast$ product and the co-product are internal on the algebra of planar binary trees which is then isomorphic to a sub-Hopf algebra of $k[S^\infty]$ with the same product and co-product. The identity of the Hopf Algebra $k[Y^\infty]$ is the tree with a single edge and no vertices, following the convention of considering only internal vertices, which represents the empty permutation, and the trivial permutation of $S_1$ is represented by the tree with one vertex and two leaves, see fig \ref{fig:idgen}. In fact this element is the generator of the augmented algebra by the $\ast$ product. See fig. (\ref{fig:1star1star1}) for an example of an order 3 product.
 
 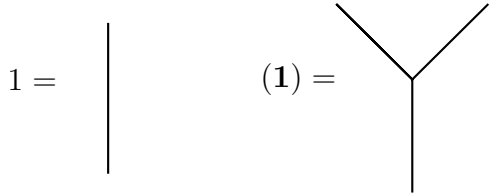
\begin{figure}
  \begin{tikzpicture}
  \draw (0,1) node{ $\Large{1=}$};\draw[thick] (1,-0.25) -- (1,1.75);
  \draw (3.5,1) node{ $\Large{(\mathbf{1})=}$};\draw[thick] (5,-0.5) -- (5,1) -- (4,2) -- (5,1) -- (6,2) ;
  \end{tikzpicture}
  \caption{The identity and the generator in $k[Y^\infty]$. Reprinted with permission from \cite{1751-8121-48-44-445205}.}\label{fig:idgen}
\end{figure}
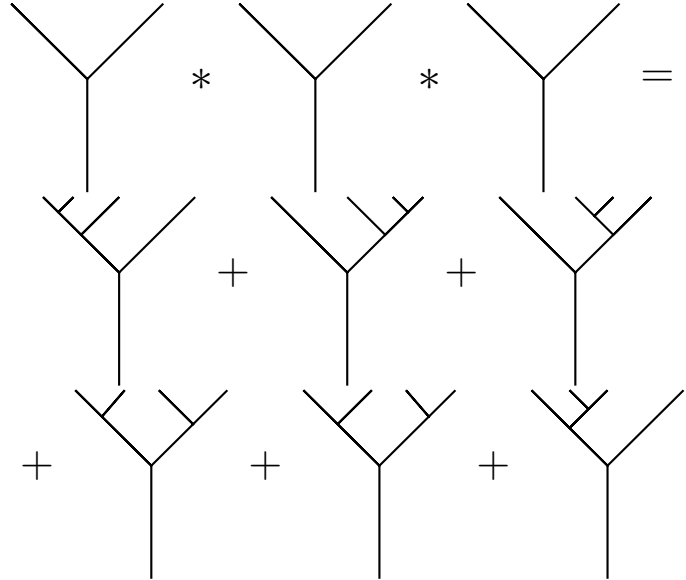
\begin{figure}
  \begin{tikzpicture}
   \draw[thick](1,-0.5) -- (1,1) -- (0,2) -- (1,1) -- (2,2) ; \draw (2.5,1) node{\textbf{{\Large $\ast$}}}; \draw[thick](4,-0.5) -- (4,1) -- (3,2) -- (4,1) -- (5,2); \draw (5.5,1) node{\textbf{{\Large $\ast$}}};\draw[thick](7,-0.5) -- (7,1) -- (6,2) -- (7,1) -- (8,2);\draw (8.5,1) node{\textbf{{\Large $=$}}};
   \end{tikzpicture}
   \begin{tikzpicture}
   \draw[thick] (1,-0.5) -- (1,1) -- (0,2) -- (0.2,1.8)--(0.4,2)--(0.2,1.8)-- (0.5,1.5)--(1,2)--(0.5,1.5)-- (1,1) -- (2,2);\draw (2.5,1) node{\textbf{{\Large $+$}}}; \draw[thick] (4,-0.5) -- (4,1) -- (3,2) -- (4,1) -- (5,2)--(4.8,1.8)--(4.6,2)--(4.8,1.8)--(4.5,1.5)--(4,2);\draw (5.5,1) node{\textbf{{\Large $+$}}}; \draw[thick] (7,-0.5) -- (7,1) -- (6,2) -- (7,1) -- (8,2)--(7.5,1.5)--(7.25,1.75)--(7.5,2)--(7.25,1.75)--(7,2);
    \end{tikzpicture}
    \begin{tikzpicture}
       \draw (-0.5,1) node{\textbf{{\Large $+$}}};\draw[thick] (1,-0.5) -- (1,1) -- (0,2) -- (0.35,1.65)--(0.65,2)--(0.35,1.65)-- (1,1) --(1.55,1.55)--(1.10,2)--(1.55,1.55) -- (2,2);\draw (2.5,1) node{\textbf{{\Large $+$}}};  \draw[thick] (4,-0.5) -- (4,1) -- (3,2) -- (3.45,1.55)--(3.9,2)--(3.45,1.55)-- (4,1) --(4.65,1.65)--(4.35,2)--(4.65,1.65) -- (5,2);\draw (5.5,1) node{\textbf{{\Large $+$}}}; \draw[thick] (7,-0.5) -- (7,1) -- (6,2) -- (6.5,1.5)--(6.75,1.75)--(6.5,2)--(6.75,1.75)--(7,2)-- (6.5,1.5) -- (7,1)-- (8,2);
       \end{tikzpicture} 
\caption{$\mathbf{(1)}\ast\mathbf{(1)}\ast\mathbf{(1)}=\mathbf{(123)}+\mathbf{(321)}+\mathbf{(312)}+\mathbf{(132)}+\mathbf{(231)}+\mathbf{(213)}$ computed in $k[S^\infty]$. Note that in $k[Y^\infty]$ the fourth and the fifth trees are the same. Reprinted with permission from \cite{1751-8121-48-44-445205}.} \label{fig:1star1star1}
 \end{figure}
 The grafting $t_1 \vee t_2$ of two trees $t_1$ and $t_2$ is the operation of producing a new tree $t$ by inserting $t_1$ on the left and $t_2$ on the right leaves of $\mathbf{(1)}$. It is clear that any tree of order $n$ can be written as $t_1\vee t_2$ with $t_1$ of order $p$, $t_2$ of order $q$ and $n=p+q+1$. If a tree has only leaves on the right branch besides the first leaf then it can be written as $|\vee t_2$ and reciprocally if it has only leaves on the left branch besides the last leaf. Note that in particular $\mathbf{(1)}=|\vee |$. 
 
 We can give an explicit expression for the correspondence between a permutation $\sigma\in S^n$ and a planar binay tree $t\in Y^n$. A known notion in combinatorics is the standard permutation associated with a set of different positive integers: say that we have a sequence of $n$ positive integers $a=(a_1,a_2\dots a_n)$ all distinct. The standard permutation $\text{std}(a)$ is the $n-$permutation that has the same ordering of the integers  as the sequence $a$. For instance, for $a=(10,5,6)$, we have that $\text{std}(a)=(312)$. It is also possible to generalize it to sequences of repeated positive integers, but we will not need it. To compute the tree $t$ start by identifiyng the position of $n$ on a permutation $\sigma\in S^n$, $i=\sigma^{-1}(n)$. Then do the standard permutations $\text{std}(\sigma(1)\sigma(2)\dots\sigma(i-1))$ and $\text{std}(\sigma(i+1)\sigma(i+2)\dots\sigma(n))$. Finally the tree $t$ will be grafting of the two trees that correspond to the two standard permutations. Obviously this is a recursive procedure, starting with $|$ for the empty permutation and $\tree$ for the single permutation $(\mathbf{1})$.
 
 To give a concrete example, take the perumutation $(\mathbf{312})$. The left of $3$ is the empty permutation which corresponds to |. The right of $3$ is the permutation $(\mathbf{12})$ which by this iteration procedure can easly be seen to correspond to $\tree\vee |$. In this way we get the third tree on the right of the equation of fig. \ref{fig:1star1star1}.
 
 In \cite{MR1654173} Loday and Ronco show that the $\ast$ product restricted to planar binary trees satisfies the identity
\begin{equation}\label{eq:shuffleident1}
    t\ast t'=t_1\vee (t_2\ast t')+(t \ast t_1^{'})\vee t_2^{'}
\end{equation}
and
\begin{equation}\label{eq:shuffleident2}
    t\ast |=|\ast t=t
\end{equation}
with $t=t_1\vee t_2$ and $t'=t_1^{'}\vee t_2^{'}$.

 If $t_1$ is a tree of order $p$ and a representative element of $W_{p+2}^{0}(p,L)$ and $t_2$ is a tree of order $q$ and a representative element of $W_{q+2}^{0}(p,M)$ then $t=t_1\vee t_2$ is a tree of order $n=p+q+1$ and a representative element of $$W^0_{n+2}(p,K)=K_p(q,\bar{q})W_{p+2}^{0}(q,L)W_{q+2}^{0}(\bar{q},M).$$ with $K=\{p_1,\dots,p_{n+1}\}=L\cup M$.
 We will clarify this in what follows.
 
  \section{The solution of topological recursion\protect\footnote{In this section we will ommit most proofs that can be found in \cite{1751-8121-48-44-445205}.}}
  \subsection{Genus 0}
  The existence of a representation map $\psi$ from the vector space of correlation functions of genus $g$ to the vector space of graphs with loops was suggested in \cite{1751-8121-48-44-445205} such that in particular to a correlation function $W^0_{n+2}(p,p_1,\dots,p_{n+1})$ of Euler characteristic $\chi=-n$ and genus 0 would correspond the trees of order $n$. In fact the way that map was defined implied that the representation of $W^0_{n+2}(p,p_1,\dots,p_{n+1})$ is the sum of all trees of order $n$. In what follows, in particular in section \ref{}, we will see that the map $\psi$ allows to give a ring structure to the space of correlation functions that obey the Eynard-Orantin formula, with an identity that will be given by the cylinder, $W^0_2(p,p_1)$. In fact it is a consequence of the axioms of topological quantum field theory as stated by Atiyah for example in \cite{atiyah1988topological} that the cylinder $\Sigma\times I$, where $\Sigma$ is a topological surface without border and $I$ is a interval of real numbers, may be identified with the identity map in a vector space. 
  \begin{defn}\label{def:W3}
  Consider the planar binary tree with one vertex $(\mathbf{1})$. The 3-point correlation function $W_3^0(p,p_1,p_2)$ is represented by the sum of two planar binary trees with one vertex, obtained by the permutation of the leaf labels $p_1$ and $p_2$:
  \begin{equation}
  \psi\left(W_3^0(p,p_1,p_2)\right)=\sum_{\text{perm. of leaf labels $\{p_1,p_2\}$}}(\mathbf{1})
  \end{equation}
 \end{defn}
 
  The trees that represent $W_3^0(p,p_1,p_2)$ are given by the permutations of the leaf labels of $|\vee |$. Then it is natural to represent the operation of grafting two trees by the insertion of the recursion kernel $K_p(q,\bar{q})$ on its roots. Therefore the symbol
 $\tree $ has two meanings. When isolated it represents $W_3^0(p,p_1,p_2)$ because the two cylinders $W^0_2(q,p_1)$ and $W^0_2(\bar{q},p_2)$ are implicitly identified with its leaves.
 When it is an internal vertex of a more complex tree it is the recursion kernel $K_p(q,\bar{q})$ with suitable labels of its variables.
   \begin{defnbis}{def:W3}\label{def:W3b}
      The propagator or cylinder (also named Bergman kernel in the literature) $W^0_2(q,\bar{q})$ is represented through $\psi$ by the empty permutation $|$ and the recursion kernel is represented through $\psi$ by $\tree$ when in an internal vertex of some tree. Then each planar binary tree of order n is a representation of an instance of some correlation function in genus 0 with each vertex identified with a recursion kernel and each left leaf identified with the cylinder $W^0_2(q_i,p_j)$ or each right leaf identified with the cylinder $W^0_2(\bar{q}_i,p_k)$. Finally the image under $\psi$ of a correlation function $W_{n+2}^0(p,p_1,\dots,p_{n+1})$ with $\chi=-n$ is the sum of all planar binary trees of order $n$ considering all permutations of their leaf labels and with the identifications mentioned above,
      \begin{equation}
      \psi\left(W_{n+2}^0(p,p_1,\dots,p_{n+1})\right)=\sum_{\substack{t_i\in Y^n \\ \text{perm. of leaf labels $\{p_1,\dots,p_{n+1}\}$}}} t_i.
      \end{equation}
    \end{defnbis}
    Hence Definition \ref{def:W3} becomes the following example:
    \begin{defn2}\label{ex:W3}
     Consider the planar binary tree with one vertex. The 3-point correlation function $W_3^0(p,p_1,p_2)$ is represented by the sum of two planar binary trees with one vertex, obtained by the permutation of the leaf labels $p_1$ and $p_2$.
     \end{defn2}
    \begin{align}
     \psi\left(W_3^0(p,p_1,p_2)\right)&=\psi\left(K_p(q,\bar{q})W^0_2(q,p_1)W^0_2(\bar{q},p_2)\right)+ \text{ perm. of $\{p_1,p_2\}$}\notag\\
     &=\sum_{\text{ perm. of $\{p_1,p_2\}$}}|\vee |\notag\\
      &=\sum_{\text{ perm. of $\{p_1,p_2\}$}}(\mathbf{1})
    \end{align}
     
   \begin{prop}
    If $W_{n+2}^0(p,p_1,\dots,p_{n+1})$ is a correlation function with Euler characteristic $\chi=-n$ that is a solution of (\ref{toprec}) then we have
    \begin{align}\label{eq:defcorr}
    \psi \left(W_{n+2}^0(p,p_1,\dots,p_{n+1})\right)&=\sum_{\substack{p+q+1=n\\|t_1|=p, |t_2|=q}} t_1\vee t_2\notag\\
    & + \text{perm. of leaf labels $\{p_1,\dots,p_{n+1}\}$}
    \end{align}
    \end{prop}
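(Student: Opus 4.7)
The plan is to proceed by strong induction on $n$, the order of the underlying planar binary trees (equivalently, $-\chi$). The base case $n=1$ is Example \ref{ex:W3}: the only decomposition with $p+q+1=1$ is $p=q=0$ with $t_1=t_2=|$, so the right-hand side of (\ref{eq:defcorr}) collapses to $\sum_{\text{perm.}} |\vee|=\sum_{\text{perm.}} (\mathbf{1})$, which matches $\psi(W_3^0(p,p_1,p_2))$.

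For the inductive step, I would first observe that in genus zero the Eynard-Orantin recursion (\ref{toprec}) simplifies, since the unstable term $W^{g-1}_{k+2}$ vanishes, leaving
\begin{equation*}
W_{n+2}^0(p,K)=\sum_{L\cup M=K} K_p(q,\bar{q})\,W_{|L|+1}^0(q,L)\,W_{|M|+1}^0(\bar{q},M).
\end{equation*}
Applying $\psi$ and invoking Definition \ref{def:W3b}, the kernel $K_p(q,\bar{q})$ at the root is represented by an internal vertex, i.e.\ by the grafting operation $\vee$, and the two factors $W_{|L|+1}^0$, $W_{|M|+1}^0$ correspond to trees of orders $|L|-1$ and $|M|-1$, both strictly smaller than $n$. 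Setting $p=|L|-1$ and $q=|M|-1$, the constraint $|L|+|M|=|K|=n+1$ becomes $p+q+1=n$. The inductive hypothesis applied to each factor (with $\psi(W_2^0)=|$ serving as a secondary base case when $|L|=1$ or $|M|=1$) expands each summand into a sum of $t_1\vee t_2$ with $|t_1|=p$, $|t_2|=q$.

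The combinatorial step I expect to require the most care is the reorganisation of the resulting double sum—over ordered partitions $L\cup M=K$ and, independently, over permutations of the leaf labels within $L$ and within $M$—into the single sum over all permutations of $\{p_1,\dots,p_{n+1}\}$ attached to grafted trees. The key point is that a leaf-labeled tree $t_1\vee t_2$ determines uniquely the partition $(L,M)$ of labels into left- and right-branch labels together with the induced labellings of the two sub-trees; conversely, every such datum produces a unique labeled grafted tree. Because planar binary trees carry an intrinsic orientation, there is no symmetry between left and right branches and no overcounting occurs. This bijection identifies the two sums exactly and yields (\ref{eq:defcorr}).
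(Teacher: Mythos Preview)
Your proof is correct, but it runs in a different direction from the paper's. The paper does not argue by induction on $n$. Instead it invokes Definition~\ref{def:W3b} directly, which already stipulates that $\psi(W_{n+2}^0)$ is the sum over all $t\in Y^n$ (with leaf-label permutations); the formula (\ref{eq:defcorr}) then follows in one line from the elementary fact that every planar binary tree of order $n$ admits a \emph{unique} decomposition $t=t_1\vee t_2$ with $|t_1|+|t_2|+1=n$. Only after obtaining the formula does the paper take the preimage under $\psi$ and observe that one recovers the genus-zero recursion, thereby checking that Definition~\ref{def:W3b} is consistent with (\ref{toprec}).

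Your route is the converse: you take the recursion (\ref{toprec}) as input, push it through $\psi$, and rebuild the tree sum inductively. This is arguably the more natural reading of the proposition as stated (``if $W_{n+2}^0$ solves (\ref{toprec}) then \dots''), and it makes explicit the bijection between ordered splittings $L\sqcup M=K$ and left/right branches of labeled grafted trees that the paper leaves implicit. The cost is the extra induction machinery and the label-bookkeeping you flag; the paper's approach bypasses both by leaning on the unique-decomposition property of planar binary trees.
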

 \begin{proof}
 This is the topological recursion in genus $0$ written with planar binary trees. For $n=1$ this is the example \ref{ex:W3}. For $n$ arbitrary by Definition \ref{def:W3b} 
 \begin{equation}
       \psi\left(W_{n+2}^0(p,p_1,\dots,p_{n+1})\right)=\sum_{\substack{t\in Y^n \\ \text{perm. of leaf labels $\{p_1,\dots,p_{n+1}\}$}}} t
       \end{equation}
Decompose uniquely any $t$ of order $n$ into $t=t_1\vee t_2$ of orders $|t_1|=p$ and $|t_2|=q$ with $p+q+1=n$ to get
 \begin{equation}\label{eq:invimage}
       \psi\left(W_{n+2}^0(p,p_1,\dots,p_{n+1})\right)=\sum_{\substack{t_1\in Y^p, t_2\in Y^q, p+q+1=n \\ \text{perm. of leaf labels $\{p_1,\dots,p_{n+1}\}$}}} t_1\vee t_2.
       \end{equation}
Then $t_1$ and $t_2$ are on the image by $\psi$ of $W_{p+2}$ and $W_{q+2}$ for $p$ and $q$ varying from 0 to $n-1$ and constrained by $p+q+1=n$. Since the operation of grafting two trees is represented by attaching the recursion kernel to its roots then, summing for all $t_1\in Y^p, t_2\in Y^q$ and for $p+q+1=n$, we get the topological recursion formula for $g=0$ after taking the preimage of (\ref{eq:invimage}) by $\psi$:
 \begin{align}
 W_{n+2}^0(p,p_1,\dots,p_{n+1})&=\notag\\
 K_p(q,\bar{q})&\sum_{\substack{L\cup M=\{p_1,\dots,p_{n+1}\},\\|L|=p+1,|M|=q+1}} W^0_{|L|+1}(q,L)W^{0}_{|M|+1}(\bar{q},M).
 \end{align} 
 \end{proof}
 \begin{rem}
Note that by $W^0_{n+2}(p,K)$ with $|K|=n+1$ we understand all instances of the correlation function with $g=0$ and $n+2$ labels. This is similar to the situation in High Energy Physics where for the same physical process described by a scattering amplitude there are several Feynman diagrams that contribute.

 It is well known that the dimension of the vector space generated by planar binary trees of order $n$ is given by the Catalan number (see for instance \cite{MR1817703}) $$c_n=\frac{2n!}{n!(n+1)n!}.$$ It is also known that correlation functions in Matrix Models have a large $N$ or planar expansion that is given in terms of Catalan numbers. Therefore it is of no surprise that there exists a correspondence between planar binary trees and correlation functions of genus 0.
 \end{rem}

 \begin{thm}
The $n$-order solution $W_{n+2}^0(p_1,\dots,p_{n+1})$ of the topological recursion in genus 0 is represented by the linear combination
$$\sum t=\mathbf{(1)}\ast\mathbf{(1)}\ast\dots\ast\mathbf{(1)}$$
with $n$ factors of $\mathbf{(1)}$ followed by the sum over all permutations of its labels. 
\end{thm}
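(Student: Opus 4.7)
The strategy is to combine the preceding proposition with the Loday--Ronco identity~(\ref{eq:shuffleident1}) via induction on $n$. By Definition~\ref{def:W3b} together with the preceding proposition, $\psi(W_{n+2}^0)$ already coincides with $\sum_{t\in Y^n} t$ summed over all permutations of the leaf labels. The content of the theorem therefore reduces to the purely algebraic identity
$$\mathbf{(1)}^{\ast n}\;=\;\sum_{t\in Y^n} t\qquad\text{in }k[Y^\infty],$$
which I would prove by strong induction on $n$.

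The base case $n=1$ is immediate since $Y^1=\{\mathbf{(1)}\}$. For the inductive step, I write $\mathbf{(1)}^{\ast n}=\mathbf{(1)}\ast\mathbf{(1)}^{\ast(n-1)}$ and invoke the inductive hypothesis to replace $\mathbf{(1)}^{\ast(n-1)}$ by $\sum_{s\in Y^{n-1}} s$. For each $s=s_1\vee s_2$, applying~(\ref{eq:shuffleident1}) with $\mathbf{(1)}=|\vee|$ and using $|\ast s=s$ from~(\ref{eq:shuffleident2}) yields
$$\mathbf{(1)}\ast s\;=\;|\vee s\;+\;\bigl(\mathbf{(1)}\ast s_1\bigr)\vee s_2.$$
Summing over $s$ and grouping by the order $p:=|s_1|$, the strong inductive hypothesis applied to the inner factor gives $\mathbf{(1)}\ast\sum_{s_1\in Y^p} s_1=\mathbf{(1)}^{\ast(p+1)}=\sum_{t_1\in Y^{p+1}} t_1$, so the right-hand side collapses to
$$\sum_{p=0}^{n-1}\Bigl(\sum_{t_1\in Y^p} t_1\Bigr)\vee\Bigl(\sum_{t_2\in Y^{n-1-p}} t_2\Bigr),$$
where the $p=0$ term absorbs $|\vee\sum_{s} s$ via the convention $Y^0=\{|\}$. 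By the unique decomposition $t=t_1\vee t_2$ of every tree of order $n$, this equals $\sum_{t\in Y^n} t$, closing the induction.

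The main obstacle is the combinatorial bookkeeping needed to reorganize the resulting double sum so that each tree of order $n$ is counted with coefficient exactly one. This step requires simultaneously invoking the induction hypothesis at every smaller order (hence strong, rather than ordinary, induction) and exploiting the uniqueness of the $t_1\vee t_2$ decomposition of planar binary trees. Everything else reduces to routine manipulation with the $\ast$ product and the grafting operation $\vee$, after which the sum over all permutations of the leaf labels is appended to recover the full representation of $W_{n+2}^0$.
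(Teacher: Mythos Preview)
Your proof is correct and follows essentially the approach implicit in the paper's setup: the paper itself omits the proof here (deferring to \cite{1751-8121-48-44-445205}), but the reduction via Definition~\ref{def:W3b} and the preceding proposition to the identity $\mathbf{(1)}^{\ast n}=\sum_{t\in Y^n}t$, followed by an induction using the recursive formula~(\ref{eq:shuffleident1}) and the unique decomposition $t=t_1\vee t_2$, is exactly the argument the surrounding machinery is built to support. Your bookkeeping is sound; the reindexing from $|s_1|$ to $|t_1|=|s_1|+1$ and the absorption of the $|\vee s$ terms into the $p=0$ summand are handled correctly, and each tree of order $n$ indeed appears with coefficient one.
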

In this way $W_{n+2}^0(p,p_1,\dots,p_{n+1})$ is represented by $\sum t$ followed by the identification of cylinders $W^0_2(q_i,p_j)$ with the left leaves or $W^0_2(\bar{q}_i,p_k)$ with the right leaves and finally by summing over all permutations of the labels $p_1,p_2,\dots, p_{n+1}$. In other words, the $\ast$ product $\mathbf{(1)}\ast\mathbf{(1)}\ast\dots\ast\mathbf{(1)}$ gives all possible insertions of recursion kernels of $W_{n+2}^0$.
\subsection{Genus higher than 0}
The procedure of attaching an edge to two consecutive leaves and producing a graph with loops allows to represent correlations functions with genus $g>0$. This is equivalent to extract the outermost cylinders $W^0_2(x,p_j),W_2^0(y,p_{j+1}), x=q_i$ or $\bar{q}_i$, $y=\bar{q}_j$ or $q_j$ and to couple a cylinder $W^0_2(x,y)$  to two recursion kernels $K_{q_l}(q_i,\bar{q}_i)$ and $K_{q_m}(q_j,\bar{q}_j)$, for some convenient choice of indices, that are identified with two internal vertices. This procedure does not change the Euler characteristic of the associated correlation functions because the number of pairs of leaf labels is reduced exactly as the genus is increased. For instance with this procedure we can make the sequence
\begin{equation}
W_5^0(p,p_1,\dots p_4)\longrightarrow W^1_3(p,p_1,p_2)\longrightarrow W^2_1(p)
\end{equation} 
and remain in the same graded vector space that contains $k[Y^3]$. How this changes the Hopf algebra structure is not yet clear. For now, we define the operation $_{i}\leftrightarrow_{i+1}$ on a planar binary tree. 
\begin{defn}\label{defn:connecting}
Starting with a planar binary tree of order $n$ and $n+2$ labels (including the root label $p$) the operation $_{i}\leftrightarrow_{i+1}$ consists in erasing the labels of the leaves $i$ and $i+1$ then connecting them by an edge and finally relabeling the remaining leaves, now numbered $j$ with $j=0,\dots,n-2$, with the $p_{j+1}$ labels, producing in this way a graph with one loop.
\end{defn}
Therefore we represent a correlation function $W^g_{k}(p,p_1,\dots,p_{k-1})$ of genus $g$ by graphs with loops $t^g$ that are obtained by successive applications of the $_{i}\leftrightarrow_{i+1}$ operation. We denote by $\left(Y^n\right)^g$ the set of different graphs with $g$ loops that are obtained from trees $t\in Y^n$.
\begin{defn}
A correlation function $W^g_{k}(p,p_1,\dots,p_{k-1})$ of genus $g$ and Euler characteristic $\chi=2-2g-k$ is represented by a sum of all different graphs with loops $t^g\in \left(Y^n\right)^g$ for  $n=-\chi$:
\begin{equation}
\psi\left(W^g_{k}(p,p_1,\dots,p_{k-1})\right)=\sum_{t^g\in (Y^n)^g} t^g
\end{equation}
\end{defn}
\begin{rem}
Two graphs $(t)^g,(t')^g\in (Y^n)^g$ are considered different in the obvious way. Either the underlying binary trees $t,t'\in Y^n$ are distinct as base elements of $k[Y^\infty]$ or the tree $t$ has a pair of leaves, say $(i,i+1)$ that are identified with an edge in $(t)^g$ producing a loop and are free in $(t')^g$ (and reciprocally because the two graphs have the same genus). 
\end{rem}
In particular $W_1^1(p)$ is represented by a single graph with one loop denoted $(\mathbf{1})^1$ whose underlying planar binary tree is $(\mathbf{1})$. More generally we have
\begin{prop}\label{prop:secterm-W1}
The second summand of the topological recursion formula for the correlation function $W_{n}^1(p,p_1,\dots,p_{n-1})$ with $\chi=-n$ is represented by the sum 
\begin{equation}\label{eq:prop2}
\sum_{\substack{(t_1)^1\in \left(Y^p\right)^1, t_2\in Y^q\\p+q+1=n}}(t_1)^1 \vee t_2 +\sum_{\substack{t_1\in Y^p, (t_2)^1\in \left(Y^q\right)^1\\p+q+1=n}} t_1\vee (t_2)^1
\end{equation}
where each underlying planar binary tree $t$ of order $n$ is decomposed as $t=t_1\vee t_2$, with $|t_1|=p$, $|t_2|=q$, $p+q+1=n$.
\end{prop}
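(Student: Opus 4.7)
The plan is to specialize the second summand of the topological recursion formula (\ref{toprec}) to $g=1$ and then apply $\psi$ term by term. Since $h$ ranges only over $\{0,1\}$, the second summand of $W_n^1(p,p_1,\ldots,p_{n-1})$ reads
\begin{align*}
K_p(q,\bar{q})\sum_{L\cup M=K}\bigl(&W^0_{|L|+1}(q,L)W^1_{|M|+1}(\bar{q},M)\\
&+W^1_{|L|+1}(q,L)W^0_{|M|+1}(\bar{q},M)\bigr),
\end{align*}
where $K=\{p_1,\ldots,p_{n-1}\}$ and the Euler characteristic constraint forces $|L|\ge 1$ in the first piece and $|M|\ge 1$ in the second.

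Next I would apply $\psi$ summand by summand. For the first piece, the preceding proposition gives $\psi(W^0_{|L|+1}(q,L))=\sum_{t_1\in Y^p} t_1$ with $p=|L|-1$ (summed over leaf labellings), and by the definition of $\psi$ in genus one, $\psi(W^1_{|M|+1}(\bar{q},M))=\sum_{(t_2)^1\in (Y^q)^1}(t_2)^1$ with $q=|M|+1$. The recursion kernel $K_p(q,\bar{q})$ is represented as an internal root vertex $\tree$, so multiplication by $K_p(q,\bar{q})$ followed by identifying $q$ and $\bar{q}$ with the roots of the two factors is exactly the grafting operation $\vee$. Hence the image of the first piece is $\sum t_1\vee(t_2)^1$ ranging over $t_1\in Y^p$, $(t_2)^1\in(Y^q)^1$, with $p+q+1=|L|-1+|M|+1+1=|L|+|M|+1=n$. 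The symmetric argument handles the $h=1$ piece, yielding $\sum (t_1)^1\vee t_2$.

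Summing the two contributions and noting that the constraints $|L|\ge 1$ and $|M|\ge 1$ translate into $p\ge 0$ and $q\ge 0$ in the respective sums (which is automatic since the empty tree $|$ represents $W^0_2$, already a cylinder rather than a recursion-kernel insertion), one recovers precisely (\ref{eq:prop2}).

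The only genuine obstacle is bookkeeping: one must check that the leaf-label permutations implicit in $\psi$ are compatible with the decomposition $L\cup M=K$, and that the grafting representation of the kernel insertion composes correctly with the loop-forming operation $_{i}\leftrightarrow_{i+1}$ used to build $(t_i)^1$. Both points reduce to observing that the $\leftrightarrow$ operation acts on leaves internal to the left or right branch after grafting, so it commutes with the root insertion $t_1\vee t_2$; this is the key compatibility making the proof go through.
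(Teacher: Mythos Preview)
Your argument is correct and follows the natural route the paper sets up: specialize the second summand of (\ref{toprec}) to $g=1$, apply $\psi$ factor by factor using Proposition~1 for the genus-zero pieces and Definition~3 for the genus-one pieces, and invoke the grafting representation of the recursion kernel to assemble the result. The paper itself omits the proof here (deferring to \cite{1751-8121-48-44-445205}), but your approach is exactly the one the surrounding text is built around; in particular, your closing observation that the loop-forming operation $_{i}\leftrightarrow_{i+1}$ acts within a single branch and therefore commutes with the root grafting is precisely the compatibility the paper invokes a few lines later when it notes that $_{i}\leftrightarrow_{i+1}$ behaves like a derivation on $t_1\vee t_2$.
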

Now we consider the first term in topological recursion which for $W_{n}^1(p,p_1,\dots,p_{n-1})$ is
\begin{equation}\label{eq:firstW^1}
K_p(q,\bar{q})W^0_{n+1}(q,\bar{q},p_1,\dots,p_{n-1}).
\end{equation} 
We start by a definition:
\begin{defn}
The ungrafting operation $\raisebox{1.7mm}{$\line(1,0){10}$}\hspace{-3mm}\vee$ is defined by removing from $t$ the tree $(\mathbf{1})$ that contains the root producing a forest with two trees $t_1$ and $t_2$. When $t$ represents an instance of a correlation function then the roots of $t_1$ and $t_2$ are labeled by $q$ and $\bar{q}$ and as before the tree $(\mathbf{1})$ represents $K_p(q,\bar{q})$.
\end{defn}
\begin{rem}
The operations $\vee$ and $\raisebox{1.7mm}{$\line(1,0){10}$}\hspace{-3mm}\vee$ are similar to the operations $B^+$ and $B^-$ of the Connes-Kreimer Hopf Algebra described, for instance, in \cite{MR1725011}.
\end{rem}
If we start with the planar binary tree $t\in Y^n$ with leaves labels $p_1,\dots, p_{n+1}$ and root label $p$ and identify two nearest neighbor leaves in opposite branches then we get a 1-loop graph $t^1\in \left(Y^n\right)^1$ with a relabeling $p_1,\dots,p_{n-1}$. Then, by applying $\raisebox{1.7mm}{$\line(1,0){10}$}\hspace{-3mm}\vee$, we get another tree $t'$ with two more edges with labels $q$ and $\bar{q}$ besides the leaves labelled by $p_1,\dots,p_{n-1}$. This tree is isomorphic as a graph to a planar binary tree in $Y^{n-1}$ that we denote also $t'$ by promoting the edge with the label $q$ to the root and the other edge to the rightmost leaf, see fig \ref{fig:ungraft} for an example with $W_3^1(p,p_1,p_2)$. In this way we get a representation of (\ref{eq:firstW^1}) by summing over all planar binary trees $t'\in Y^{n-1}$.
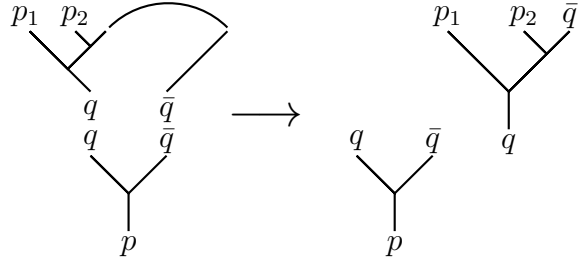
\begin{figure}
 \begin{tikzpicture}
     \draw[thick] (1,-1) -- (1,-0.5) -- (0.5,0) -- (1,-0.5) -- (1.5,0);
     \draw[thick] (0.2,1.15)--(0.7,1.65)--(0.5,1.45)--(0.3,1.65)--(0.5,1.45)--(0.2,1.15)--(-0.3,1.65)-- (0.5,0.85);  \draw[thick] (1.5,0.85)--(1.95,1.3)--(2.3,1.65);
     \draw (0.5,0.2) node{ $q$};\draw (1.5,0.2) node{ $\bar{q}$};
     \draw (0.5,0.6) node{ $q$};\draw (1.5,0.6) node{ $\bar{q}$};
     \draw (1,-1.2) node{ $p$};
      \draw (-0.35,1.85) node{ $p_1$};\draw (0.3,1.85) node{ $p_2$};
      \draw[thick] (2.3,1.7) arc (45:135:1.10cm);
      \draw (2.8,0.5) node{\textbf{{\Large $\longrightarrow$}}};
       \draw[thick] (4.5,-1) -- (4.5,-0.5) -- (4,0) -- (4.5,-0.5) -- (5,0);
       \draw (4.5,-1.2) node{ $p$};
       \draw (4,0.2) node{ $q$};\draw (5,0.2) node{ $\bar{q}$};
        \draw[thick] (6,0.85) -- (5.2,1.65) -- (6,0.85)--(6.8,1.65)--(6.5,1.35) -- (6.2,1.65)--(6.5,1.35)-- (6,0.85)--(6,0.35);
         \draw (6,0.15) node{ $q$};
         \draw (5.2,1.85) node{ $p_1$};\draw (6.2,1.85) node{ $p_2$};
         \draw (6.8,1.85) node{ $\bar{q}$};
      \end{tikzpicture} 
\caption{Ungrafting a 1-loop graph of $W_3^1(p,p_1,p_2)$. The resulting tree is $(\mathbf{21})$. Reprinted with permission from \cite{1751-8121-48-44-445205}.}\label{fig:ungraft}
\end{figure}

In \cite{1751-8121-48-44-445205} we prove the following proposition
\begin{prop}\label{first-term-top-rec}
	The representation of the first term of the topological recursion formula for $W^1_{n}(p,p_1,\dots,p_{n-1})$ is given by the identification of leaves on opposite branches of the decomposition $t=t_1\vee t_2$ with $t\in Y^n, t_1\in Y^p, t_2\in Y^q,p+q+1=n$:
	\begin{equation}\label{eq:prop3}
	\psi\left(K_p(q,\bar{q})W^0_{n+1}(q,\bar{q},p_1,\dots,p_{n-1})\right)=\sum_{\substack{t_1\in Y^p, t_2\in Y^q\\p+q+1=n}}t_1 \bridge t_2 
	\end{equation}
	\end{prop}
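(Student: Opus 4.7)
The plan is to invert the construction: starting from the symbol $t_1 \bridge t_2$ (which encodes the identification of the nearest-neighbor leaves in opposite branches of $t=t_1\vee t_2$), I would apply the ungrafting operation $\raisebox{1.7mm}{$\line(1,0){10}$}\hspace{-3mm}\vee$ at the root and then re-root to produce a genuine planar binary tree $t'\in Y^{n-1}$. The resulting $t'$ will carry two distinguished labels $q$ and $\bar q$ (one on the new root, one on the rightmost leaf) coming from the two edges that get exposed when the root vertex is removed, while the remaining $n-1$ leaves inherit the labels $p_1,\dots,p_{n-1}$.

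The first step is to verify that this map $(Y^n)^1 \supset \{t_1\bridge t_2\} \longrightarrow Y^{n-1}$ is a bijection. Every $t'\in Y^{n-1}$ can be recovered uniquely by singling out its root edge and its rightmost leaf, grafting a new $(\mathbf{1})$ at the root carrying the label $p$, and finally closing the loop that identifies the rightmost leaf of the left branch with the leftmost leaf of the right branch of the resulting $t = t_1\vee t_2 \in Y^n$. Concretely, the illustration in fig.~\ref{fig:ungraft} is the prototype of this correspondence, and the general case is proved by recursion on $n$ using the unique decomposition $t=t_1\vee t_2$.

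The second step is to sum. By the previous Proposition applied in genus $0$ at Euler characteristic $-(n-1)$, the sum $\sum_{t'\in Y^{n-1}} t'$, augmented by all permutations of the leaf labels $\{q,\bar q,p_1,\dots,p_{n-1}\}$ with $q,\bar q$ kept in their distinguished positions, represents $\psi\bigl(W^0_{n+1}(q,\bar q,p_1,\dots,p_{n-1})\bigr)$. Reattaching the ungrafted vertex $(\mathbf{1})$ at the root with leaves labeled $q,\bar q$ and root labeled $p$ corresponds, under the identification in Definition \ref{def:W3b}, to multiplication by $K_p(q,\bar q)$. Combining these two statements yields
\[
\psi\!\left(K_p(q,\bar q)\, W^0_{n+1}(q,\bar q,p_1,\dots,p_{n-1})\right) \;=\; \sum_{\substack{t_1\in Y^p,\, t_2\in Y^q\\ p+q+1=n}} t_1 \bridge t_2,
\]
which is exactly (\ref{eq:prop3}).

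The main obstacle I anticipate is the bookkeeping in the bijection of the first step: one has to check that the re-rooting procedure is well-defined, that the loop edge in $t_1\bridge t_2$ always connects two leaves which, after ungrafting, sit on opposite sides of the chosen root-rightmost-leaf pair in $t'$, and that the permutations of labels match up on both sides. All of these are ultimately consequences of the uniqueness of the decomposition $t=t_1\vee t_2$ together with the fact that ungrafting at the root is the inverse of grafting, but the labeling argument has to be done carefully so that the symbols $q$ and $\bar q$ end up in the right positions relative to $L$ and $M$ as in the genus-$0$ recursion.
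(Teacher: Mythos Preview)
Your proposal is correct and follows essentially the same route the paper sketches in the paragraph preceding the proposition (the full proof being deferred to \cite{1751-8121-48-44-445205}): ungraft $t_1\bridge t_2$ at the root, promote the exposed $q$-edge to the new root and the $\bar q$-edge to the rightmost leaf, obtain a tree $t'\in Y^{n-1}$, and then sum over $Y^{n-1}$ to recover the genus-$0$ representation of $W^0_{n+1}(q,\bar q,p_1,\dots,p_{n-1})$; your identification of the bijection and the label bookkeeping as the delicate points is exactly what the argument in the reference spends its effort on.
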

The obvious notation $\phantom{.}\bridge\phantom{.}$ means that two consecutive leaves in opposite branches are identified.

Therefore we have exhausted all possibilities of obtaining 1-loop graphs from planar binary trees of order $n$ and the two previous propositions imply the following theorem:
\begin{thm}
The $n$ order solution $W^1_{n}(p,p_1,\dots,p_{n-1})$ of the topological recursion in genus $1$ is given by
$\mathbf{(1)}\ast\mathbf{(1)}\ast\dots\ast\mathbf{(1)}$, with $n$ factors, followed the identification of pairs of nearest neighbor leaves producing 1-loop graphs and finally by  summing over all permutations of leaves labels $p_1,p_2,\dots, p_{n-1}$:
\begin{align}
\psi\left(W^1_{n}(p,p_1,\dots,p_{n-1})\right)&=\notag\\
\sum_{\text{perm. of }\{p_1,\dots,p_{n-1}\}}\sum_{i=0}^{n-1} 
& _{i}\leftrightarrow_{i+1} \left(\mathbf{(1)}\ast\mathbf{(1)}\ast\dots\ast\mathbf{(1)}\right)
\end{align}
\end{thm}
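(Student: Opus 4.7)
The plan is to combine the two preceding propositions (\ref{prop:secterm-W1} and \ref{first-term-top-rec}), which between them cover the two summands of the Eynard--Orantin formula specialized to genus $1$, and then invoke the genus-$0$ theorem that identifies $\sum_{t\in Y^n} t$ with the $n$-fold $\ast$-product $\mathbf{(1)}\ast\mathbf{(1)}\ast\dots\ast\mathbf{(1)}$. The essential observation is combinatorial: given a planar binary tree $t\in Y^n$ with its canonical left-to-right leaf labelling $0,1,\dots,n$, applying $_i\leftrightarrow_{i+1}$ for all $i\in\{0,\dots,n-1\}$ enumerates exactly the three mutually exclusive ways in which a single loop can be inserted into $t$ relative to its canonical decomposition $t=t_1\vee t_2$ with $|t_1|=p,\ |t_2|=q,\ p+q+1=n$.

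First I would fix such a decomposition and partition the indices $\{0,\dots,n-1\}$ into three blocks: $i\in\{0,\dots,p-1\}$, $i=p$, and $i\in\{p+1,\dots,n-1\}$. For $i<p$ both identified leaves lie in $t_1$, so $_i\leftrightarrow_{i+1}(t)$ is of the form $(t_1)^1\vee t_2$; for $i>p$ both lie in $t_2$, giving $t_1\vee (t_2)^1$; for $i=p$ the rightmost leaf of $t_1$ is glued to the leftmost leaf of $t_2$, which is precisely the bridge $t_1\bridge t_2$. Letting $t_1$ range over $Y^p$ and $t_2$ over $Y^q$ with $p+q+1=n$, the inner identifications traverse every element of $(Y^p)^1$ and $(Y^q)^1$ respectively, so that
\begin{align*}
\sum_{t\in Y^n}\sum_{i=0}^{n-1}{}_i\leftrightarrow_{i+1}(t)
&=\sum_{\substack{(t_1)^1\in(Y^p)^1,\,t_2\in Y^q\\p+q+1=n}}(t_1)^1\vee t_2
+\sum_{\substack{t_1\in Y^p,\,(t_2)^1\in(Y^q)^1\\p+q+1=n}}t_1\vee (t_2)^1\\
&\quad+\sum_{\substack{t_1\in Y^p,\,t_2\in Y^q\\p+q+1=n}}t_1\bridge t_2 .
\end{align*}

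By Proposition \ref{prop:secterm-W1} the first two sums are $\psi$ applied to the second summand of the Eynard--Orantin formula for $W^1_n$, while by Proposition \ref{first-term-top-rec} the third sum is $\psi$ applied to the first summand $K_p(q,\bar q)W^0_{n+1}(q,\bar q,p_1,\dots,p_{n-1})$. Adding them reproduces the full right-hand side of (\ref{toprec}) for $g=1$, so the displayed sum equals $\psi(W^1_n(p,p_1,\dots,p_{n-1}))$ up to the permutations of leaf labels. Finally, by the genus-$0$ theorem $\sum_{t\in Y^n}t=\mathbf{(1)}\ast\mathbf{(1)}\ast\dots\ast\mathbf{(1)}$ as an identity in $k[Y^\infty]$, and since $_i\leftrightarrow_{i+1}$ is a linear operation on the vector space spanned by planar binary trees it commutes with the sum, yielding the stated formula after averaging over the $(n-1)!$ label permutations.

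The main obstacle is the bookkeeping step in the first paragraph: one must check that the canonical leaf numbering behaves consistently under the decomposition $t=t_1\vee t_2$ so that the partition of $\{0,\dots,n-1\}$ into the three blocks really corresponds to loops inside $t_1$, loops across the root, and loops inside $t_2$, and moreover that as $i$ varies within the first (resp.\ third) block and $t_1$ (resp.\ $t_2$) varies over $Y^p$ (resp.\ $Y^q$) one obtains each element of $(Y^p)^1$ (resp.\ $(Y^q)^1$) exactly once. Once this matching is verified, the rest of the argument is an assembly of results already proved.
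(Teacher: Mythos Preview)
Your proposal is correct and follows essentially the same approach as the paper: the paper simply states that ``we have exhausted all possibilities of obtaining 1-loop graphs from planar binary trees of order $n$ and the two previous propositions imply the following theorem,'' without spelling out the partition of indices or the matching with $(Y^p)^1$ and $(Y^q)^1$. Your write-up makes explicit the three-block partition $\{0,\dots,p-1\}\cup\{p\}\cup\{p+1,\dots,n-1\}$ and the appeal to Theorem~1 for the identification $\sum_{t\in Y^n}t=\mathbf{(1)}^{\ast n}$, which is exactly the bookkeeping the paper leaves implicit.
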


Next we proved in \cite{1751-8121-48-44-445205} a simple lemma regarding symmetric graphs as in fig. \ref{fig:symgraph}. Note that the resulting ungrafted graphs have the left-right order of the right branch of the original graph exchanged:
\begin{lem}
If a graph that enters in the representation of the correlation function $W^{2g+1}_{k+1}(p,K)$ has nearest neighbor leaves identified in different branches and is symmetric with respect to the vertical axis that passes through the root then it has a weight factor of $2$, that is, it appears two times in the complete graph representation of $W^{2g+1}_{k+1}(p,K)$.
\end{lem}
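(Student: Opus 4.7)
The plan is to exhibit precisely two contributions to the symmetric graph $G$ in the representation of $W^{2g+1}_{k+1}$ coming from the first term of topological recursion. Since $G$ carries a cross-branch identification at its topmost loop, it lies in the image of the graft-and-loop construction implementing $K_p(q,\bar{q})\,W^{2g}_{k+2}(q,\bar{q},K)$, so I can analyze it by applying the ungrafting operation $\raisebox{1.7mm}{$\line(1,0){10}$}\hspace{-3mm}\vee$ described in the preceding section.

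First I would unwind the ungrafting on $G$: remove the central vertex above the root together with the topmost loop edge, promote one of the two former loop endpoints to the new root (labeled $q$) and the other to the rightmost leaf (labeled $\bar{q}$), with the right branch's left-right order reversed as noted just before the lemma. Let the planar binary tree underlying $G$ (before the identifications) be decomposed as $t = t_1 \vee t_2$, where the cross-branch loop joins the rightmost leaf of $t_1$ to the leftmost leaf of $t_2$. The reflection symmetry of $G$ about the vertical axis through the root forces $t_2$ to be the mirror image of $t_1$. After the right-branch reversal built into the ungrafting, the resulting tree $t'$ therefore takes the symmetric form $L \vee L$ with $L = t_1$.

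Because the two former loop endpoints play interchangeable roles under the symmetry of $G$, there are two valid labelings of the special edges $q$ and $\bar{q}$, producing two labeled trees $t'_1, t'_2 \in Y^{n-1}$ that both reconstruct $G$ under the graft-and-loop step. Both therefore contribute to the enumeration, accounting for the multiplicity of two. I would then verify that the graft-and-loop map is otherwise injective: on non-symmetric configurations there is a unique ungrafted preimage, so these two labeled trees are the only contributions. The odd-genus hypothesis $2g+1$ enters by ensuring that the remaining $2g$ identifications used to build $G$ split as $g$ mirror pairs across the symmetry axis, which is compatible with the $\mathbb{Z}/2$ ambiguity and confirms that both $t'_1$ and $t'_2$ are legitimate objects in the sum.

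The main obstacle I expect is to show cleanly that $t'_1 \neq t'_2$ as labeled planar binary trees in $Y^{n-1}$. The subtlety is to separate the structural $\mathbb{Z}/2$ automorphism of the unlabeled shape $L \vee L$ from the labeled positional asymmetry imposed by the root/rightmost-leaf convention. Since the labels $q$ and $\bar{q}$ are pinned to these two distinguished positions, swapping them produces a genuinely different labeled tree, so the two contributions are independent and yield the claimed weight factor of $2$.
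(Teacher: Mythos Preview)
Your approach is the same as the paper's: exhibit two distinct preimages of the symmetric graph $G$ under the graft-and-loop map implementing $K_p(q,\bar q)\,W^{2g}_{k+2}(q,\bar q,K)$, arising from the $q\leftrightarrow\bar q$ exchange that fixes $G$ (by symmetry) but yields different ungrafted objects. That is precisely what the figure accompanying the lemma depicts.

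One correction worth making: the ungrafted tree is \emph{not} of the symmetric form $L\vee L$. The ungrafting operation $\raisebox{1.7mm}{$\line(1,0){10}$}\hspace{-3mm}\vee$ removes the root vertex entirely, so the $\vee$-decomposition $t_1\vee t_2$ at the root is destroyed, not preserved with a branch reversal. As the figure shows, the two preimages are in fact mirror-image planar trees (shapes $(\mathbf{12})$ and $(\mathbf{21})$ in the base case): one carries $\bar q$ on the leftmost leaf with $\overline{t}^g$ decorations, the other carries $\bar q$ on the rightmost leaf with $t^g$ decorations. This actually dissolves the obstacle you flag in your last paragraph: the two preimages differ already as unlabeled planar binary trees, not merely in the placement of the labels $q,\bar q$, so $t'_1\neq t'_2$ is immediate.
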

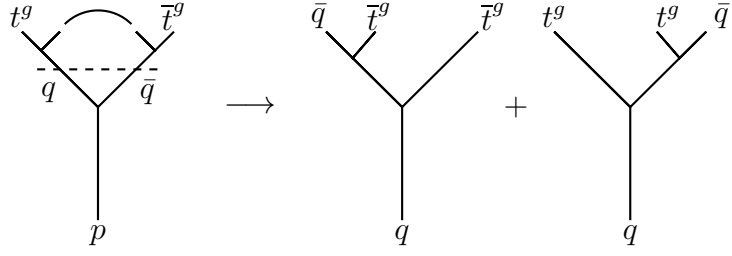
\begin{figure}
 \begin{tikzpicture}
 \draw[thick](1,-0.5) -- (1,1) --(0.25,1.75)--(0.5,2)--(0.25,1.75)-- (0,2) -- (1,1) -- (1.75,1.75)--(1.5,2)--(1.75,1.75)--(2,2) ; \draw (0,2.2) node{ $t^g$};
 \draw (2,2.2) node{ $\overline{t}^g$}; \draw[thick] (1.44,2.1) arc (45:135:0.62cm);
 \draw (3,1) node{ $\longrightarrow$};
 \draw[thick,dashed] (0.2,1.5)--(1.8,1.5);\draw (0.35,1.2) node{ $q$};
 \draw (1.65,1.2) node{ $\bar{q}$};
 \draw (1,-0.7) node{ $p$};
 \draw[thick] (5,-0.5) -- (5,1) -- (4,2) -- (4.35,1.65)--(4.65,2)--(4.35,1.65)-- (5,1) -- (6,2);\draw (6.5,1) node{ $+$}; \draw[thick] (8,-0.5) -- (8,1) -- (7,2) -- (8,1) -- (9,2)--(8.65,1.65)--(8.35,2)--(8.65,1.65);
  \draw (3.9,2.2) node{ $\bar{q}$};\draw (4.7,2.2) node{$\overline{t}^g$};\draw (6.2,2.2) node{ $\overline{t}^g$};
 \draw (7,2.2) node{ $t^g$};\draw (8.5,2.2) node{$t^g$};
  \draw (9.2,2.2) node{ $\bar{q}$};
  \draw (5,-0.7) node{ $q$}; \draw (8,-0.7) node{ $q$};
 \end{tikzpicture} 
\caption{A symmetric $2g+1$-loop graph with $t^g$ a $g-$loop graph. The graph $\overline{t}^g$ is the reflection of $t^g$ on the vertical axis that passes through the root. After the ungrafting operation the first graph is obtained by exchanging $q$ and $\bar{q}$ on the original graph. Reprinted with permission from \cite{1751-8121-48-44-445205}.}\label{fig:symgraph}
\end{figure}
\begin{defn2}
See fig. \ref{fig:W13} for the graph representation of $W_3^1(p,p_1,p_2)$ with a graph of weight 2.
\end{defn2}
\begin{figure}
	\begin{tikzpicture}
	\draw[thick] (1,-0.5) -- (1,1) -- (0,2) -- (0.2,1.8)--(0.4,2)--(0.2,1.8)-- (0.5,1.5)--(1,2)--(0.5,1.5)-- (1,1) -- (2,2);\draw (2.5,1) node{\textbf{{\Large $+$}}}; \draw[thick] (4,-0.5) -- (4,1) -- (3,2) -- (4,1) -- (5,2)--(4.8,1.8)--(4.6,2)--(4.8,1.8)--(4.5,1.5)--(4,2);\draw (5.5,1) node{\textbf{{\Large $+$}}}; \draw[thick] (7,-0.5) -- (7,1) -- (6,2) -- (7,1) -- (8,2)--(7.5,1.5)--(7.25,1.75)--(7.5,2)--(7.25,1.75)--(7,2);\draw[thick] (2,2.1) arc (45:135:0.67cm);\draw[thick] (3.95,2.1) arc (30:150:0.58cm);\draw[thick] (6.95,2.1) arc (30:150:0.58cm);
	\end{tikzpicture}
	\begin{tikzpicture}
	\draw (-0.5,1) node{\textbf{{\Large $+$}}};\draw[thick] (1,-0.5) -- (1,1) -- (0,2) -- (0.35,1.65)--(0.65,2)--(0.35,1.65)-- (1,1) --(1.65,1.65)--(1.35,2)--(1.65,1.65) -- (2,2);\draw (2.5,1);  \draw (3.5,1) node{\textbf{{\Large $+$}}}; \draw[thick] (7,-0.5) -- (7,1) -- (6,2) -- (6.5,1.5)--(6.75,1.75)--(6.5,2)--(6.75,1.75)--(7,2)-- (6.5,1.5) -- (7,1)-- (8,2);
	\draw[thick] (2,2.1) arc (30:150:0.35cm);\draw[thick] (8,2.1) arc (45:135:0.67cm);
	\end{tikzpicture} 
	\begin{tikzpicture}
	\draw[thick] (1,-0.5) -- (1,1) -- (0,2) -- (0.2,1.8)--(0.4,2)--(0.2,1.8)-- (0.5,1.5)--(1,2)--(0.5,1.5)-- (1,1) -- (2,2);\draw (2.5,1) node{\textbf{{\Large $+$}}}; \draw[thick] (4,-0.5) -- (4,1) -- (3,2) -- (4,1) -- (5,2)--(4.8,1.8)--(4.6,2)--(4.8,1.8)--(4.5,1.5)--(4,2);\draw (5.5,1) node{\textbf{{\Large $+$}}}; \draw[thick] (7,-0.5) -- (7,1) -- (6,2) -- (7,1) -- (8,2)--(7.5,1.5)--(7.25,1.75)--(7.5,2)--(7.25,1.75)--(7,2);\draw[thick] (0.4,2.1) arc (30:150:0.2cm);\draw[thick] (5,2.1) arc (45:135:0.3cm);\draw[thick] (8.05,2.1) arc (45:135:0.35cm);
	\end{tikzpicture}
	\begin{tikzpicture}
	\draw (-0.5,1) node{\textbf{{\Large $+$}}};\draw[thick] (1,-0.5) -- (1,1) -- (0,2) -- (0.35,1.65)--(0.65,2)--(0.35,1.65)-- (1,1) --(1.65,1.65)--(1.35,2)--(1.65,1.65) -- (2,2);\draw (2.5,1);  \draw (3.5,1) node{\textbf{{\Large $+$}}}; \draw[thick] (7,-0.5) -- (7,1) -- (6,2) -- (6.5,1.5)--(6.75,1.75)--(6.5,2)--(6.75,1.75)--(7,2)-- (6.5,1.5) -- (7,1)-- (8,2);
	\draw[thick] (0.6,2.1) arc (30:150:0.35cm);\draw[thick] (6.44,2.1) arc (30:150:0.29cm);
	\end{tikzpicture} 
	\begin{tikzpicture}
	\draw[thick] (1,-0.5) -- (1,1) -- (0,2) -- (0.2,1.8)--(0.4,2)--(0.2,1.8)-- (0.5,1.5)--(1,2)--(0.5,1.5)-- (1,1) -- (2,2);\draw (2.5,1) node{\textbf{{\Large $+$}}}; \draw[thick] (4,-0.5) -- (4,1) -- (3,2) -- (4,1) -- (5,2)--(4.8,1.8)--(4.6,2)--(4.8,1.8)--(4.5,1.5)--(4,2);\draw (5.5,1) node{\textbf{{\Large $+$}}}; \draw[thick] (7,-0.5) -- (7,1) -- (6,2) -- (7,1) -- (8,2)--(7.5,1.5)--(7.25,1.75)--(7.5,2)--(7.25,1.75)--(7,2);\draw[thick] (1,2.1) arc (45:135:0.4cm);\draw[thick] (4.6,2.1) arc (45:135:0.4cm);\draw[thick] (7.5,2.1) arc (45:135:0.35cm);
	\end{tikzpicture}
	\begin{tikzpicture}
	\draw (-0.5,1) node{\textbf{{\Large $+2$}}};\draw[thick] (1,-0.5) -- (1,1) -- (0,2) -- (0.35,1.65)--(0.65,2)--(0.35,1.65)-- (1,1) --(1.65,1.65)--(1.35,2)--(1.65,1.65) -- (2,2);\draw (2.5,1);  \draw (3.5,1) node{\textbf{{\Large $+$}}}; \draw[thick] (7,-0.5) -- (7,1) -- (6,2) -- (6.5,1.5)--(6.75,1.75)--(6.5,2)--(6.75,1.75)--(7,2)-- (6.5,1.5) -- (7,1)-- (8,2);
	\draw[thick] (1.27,2.1) arc (45:135:0.42cm);\draw[thick] (7,2.1) arc (30:150:0.29cm);\draw (3.5,-1) node{\textbf{$+$ perm. of $\{p_1,p_2\}$}};
	\end{tikzpicture} 
	\caption{$W_3^1(p,p_1,p_2)$. The root label $p$ and the leaf labels $\{p_1,p_2\}$ are omitted. Reprinted with permission from \cite{1751-8121-48-44-445205}.}\label{fig:W13}
\end{figure}
A simple but important fact is that $_{i}\leftrightarrow_{i+1}$ acts as a derivation when applied independently to the left and right branches of a tree $t=t_1\vee t_2$. This is apparent in Proposition \ref{prop:secterm-W1}. However when summing over all graphs we must take care with overcounting. If we start with $t^1=(t_1)^1\vee t_2+t_1\vee (t_2)^1$ with $|t|=n, |t_1|=p,|t_2|=q, p+q+1=n$ and apply $_{i}\leftrightarrow_{i+1}$ to the two branches independently and sum over all different graphs we get
\begin{align}
&\sum_{i=0}^{n-1} \left(_{i}\leftrightarrow_{i+1}\right)_{\text{same branches}}\left(\sum_{\substack{(t_1)^1\in (Y^{p})^1,t_2\in Y^{q}\\p+q+1=n}}(t_1)^1\vee t_2\right.\notag\\
&\left.+\sum_{\substack{t_1\in Y^p, (t_2)^1\in (Y^{q})^1\\p+q+1=n}}t_1\vee (t_2)^1\right)=\notag\\
&\sum_{\substack{(t_1)^2\in (Y^{p})^2,t_2\in Y^{q}\\p+q+1=n}}2(t_1)^2\vee t_2
+\sum_{\substack{(t_1)^1\in (Y^{p})^1,(t_2)^1\in (Y^{q})^1\\p+q+1=n}}2(t_1)^1\vee (t_2)^1\notag\\
&+\sum_{\substack{t_1\in Y^p, t_2\in Y^{q},(t_2)^2\in (Y^{q})^2\\p+q+1=n}}2t_1\vee (t_2)^2,
\end{align}
whenever the operation is well defined\footnote{The operation is not well defined if there is only one leaf available before and/or after a certain loop or if there are no more leaves to contract. In this case we set to 0 the result of acting with $_{i}\leftrightarrow_{i+1}$.}. The reasoning for the 2 factors is the double counting of identical graphs. For instance if $(t_1)^1$ has a loop starting at leaf $2$ and $(t_1)^2$ was obtained by producing a second loop starting at leaf 0, then this $(t_1)^2$ is identical to the 2-loop graph that was obtained by producing a second loop at leaf 2 in $(t_1)^1$ that had already a loop starting at leaf 0. 

Therefore the sum of different 2-loop graphs obtained from all planar binary trees $t=t_1\vee t_2$ is
\begin{align}
&\sum_{(t)^2\in (Y^n)^2} (t)^2=\sum_{\substack{(t_1)^2\in (Y^{p})^2,t_2\in Y^{q}\\p+q+1=n}}(t_1)^2\vee t_2\notag
\end{align}
\begin{align}
 &+\sum_{\substack{(t_1)^1\in (Y^{p})^1,(t_2)^1\in (Y^{q})^1\\p+q+1=n}}(t_1)^1\vee (t_2)^1\notag\\
 &+\sum_{\substack{t_1\in Y^p, (t_2)^2\in (Y^{q})^2\\p+q+1=n}}t_1\vee (t_2)^2.
\end{align}

As for the identification of two consecutive leafs in separate branches, we start again from $t^1=(t_1)^1\vee t_2+t_1\vee (t_2)^1$ to get
\begin{equation}
t^2=(t_1)^1\bridge t_2+t_1\bridge (t_2)^1
\end{equation}
The fact that it may not be always possible to contract two leaves in opposite branches is important to count the dimensions of the vector spaces $k[(Y^n)^g]$ generated by graphs with $g$ loops but we will not consider this. 

It is clear that we can continue this procedure and generate graphs $t^g\in (Y^n)^g$ with an increasing number of $g$ loops and $k$ labels (including the root) up to the consistence of the relation $-n=2-2g-k$.
Hence we have the following proposition:
\begin{prop}\label{prop:graphloop}
The graphs with loops $t^g\in (Y^n)^g$ and $k$ labels, including the root, that are compatible with $-n=2-2g-k$ are obtained from planar binary trees $t=t_1\vee t_2$ by successive applications of $\left(_{i}\leftrightarrow_{i+1}\right)_{\text{same branches}}$ and $\left(_{i}\leftrightarrow_{i+1}\right)_{\text{opposite branches}}$ to all $t\in Y^n$:
\begin{align}\label{eq:gloopgraph}
\sum_{(t)^g\in (Y^n)^g} (t)^g=&\sum_{k=0}^g\sum_{\substack{(t_1)^k\in(Y^p)^k,(t_2)^{g-k}\in(Y^q)^{g-k}\\p+q+1=n}} \left( (t_1)^k\vee (t_2)^{g-k}\right)\notag\\
&+\sum_{k=0}^{g-1} \sum_{\substack{(t_1)^k\in(Y^p)^k,(t_2)^{g-k}\in(Y^q)^{g-k}\\p+q+1=n}}\left((t_1)^{g-1-k}\bridge (t_2)^{k}\right)
\end{align}
\end{prop}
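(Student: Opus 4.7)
The plan is to prove this by induction on the number of loops $g$, using the unique grafting decomposition $t = t_1 \vee t_2$ of each underlying planar binary tree $t \in Y^n$. The base case $g = 0$ is the already-established decomposition (\ref{eq:invimage}) of $Y^n$ via grafting, and the case $g = 1$ follows by combining Propositions \ref{prop:secterm-W1} and \ref{first-term-top-rec}, which respectively handle the same-branches and opposite-branches contractions at the top grafting point.

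For the inductive step I would fix a graph $(t)^g \in (Y^n)^g$ and classify each of its $g$ loops according to where the two identified leaves sit with respect to the topmost grafting $t = t_1 \vee t_2$. Each loop is either (i) entirely contained in the left subtree $t_1$, (ii) entirely contained in the right subtree $t_2$, or (iii) a top-level bridge between the rightmost leaf of $t_1$ and the leftmost leaf of $t_2$. This yields two mutually exclusive families of graphs: those with no top-level bridge, where the $g$ loops split as $k$ inside $t_1$ and $g-k$ inside $t_2$ giving the first sum $\sum_{k=0}^{g} (t_1)^k \vee (t_2)^{g-k}$; and those with exactly one top-level bridge accounting for a single loop, where the remaining $g-1$ loops split as $g-1-k$ inside $t_1$ and $k$ inside $t_2$ giving the second sum $\sum_{k=0}^{g-1} (t_1)^{g-1-k}\bridge (t_2)^k$. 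Conversely, every graph produced on the right-hand side has precisely $g$ loops and underlying tree $t = t_1\vee t_2 \in Y^n$, so the correspondence is bijective. The inductive hypothesis then applies to $(t_1)^k$ and $(t_2)^{g-k}$ separately, which in turn handles lower-level graftings with their own potential internal bridges.

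The main obstacle is verifying that option (iii) contributes at most one loop per graph, i.e.\ that the top-level bridge is unique when present. This requires a careful reading of Definition \ref{defn:connecting}: after the rightmost leaf of $t_1$ and the leftmost leaf of $t_2$ are identified and the remaining leaves relabeled, any further contraction involving leaves on opposite sides of the top grafting would need to involve the newly adjacent pair, but the relabeling together with the already-installed loop edge effectively absorbs such a pair into $t_1$ or $t_2$ at a strictly lower grafting level. Hence configurations with multiple ``bridges'' at the same grafting are already counted by the internal structure of $(t_1)^k$ or $(t_2)^{g-k}$ on the right-hand side, which prevents double counting across the two sums. Once this uniqueness and the associated bijection are in place, the inductive step reduces to bookkeeping: one splits $g$ as the sum of the left-branch loops, right-branch loops, and a 0/1 contribution from a possible top-level bridge, then collects terms to obtain exactly the two sums in (\ref{eq:gloopgraph}).
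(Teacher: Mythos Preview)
Your inductive approach mirrors the paper's reasoning: the paper does not give a standalone proof of this proposition but presents it as the evident continuation of the procedure worked out for $g=1$ (Propositions~\ref{prop:secterm-W1} and~\ref{first-term-top-rec}) and for $g=2$ in the paragraphs immediately preceding the statement, concluding with ``It is clear that we can continue this procedure.'' Your classification of each loop according to the top grafting $t=t_1\vee t_2$ is precisely that procedure made explicit.

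There is, however, a gap in your justification of the uniqueness of the top-level bridge. After the bridge at leaves $(p,p+1)$, the leaves originally labelled $p-1$ (in $t_1$) and $p+2$ (in $t_2$) become adjacent under the relabeling of Definition~\ref{defn:connecting}, and they are \emph{not} absorbed into either subtree at a lower grafting level: they remain on opposite sides of the root vertex. The actual obstruction to a second cross-branch contraction is the well-definedness constraint recorded in the paper's footnote to the $g=2$ computation: the operation $_{i}\leftrightarrow_{i+1}$ is declared undefined (and set to zero) whenever there is only one free leaf available before and/or after an existing loop. Once the first bridge is installed, any surviving cross-branch pair is separated by that loop edge and falls under this exclusion. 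This is exactly why, as remarked in the Introduction, the balanced tree of order~$3$ contributes to $W_1^2$ only through $(t_1)^1\vee(t_2)^1$ and not through a double bridge. With this correction to the reason for uniqueness, your bijection and the ensuing inductive bookkeeping go through as written.
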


\begin{thm}
The $n$ order solution $W^g_{2-2g+n}$ of the topological recursion in genus $g>0$ and $k=2-2g+n>0$ variables is given by
$\mathbf{(1)}\ast\mathbf{(1)}\ast\dots\ast\mathbf{(1)}$, with $n$ factors, followed the identification of pairs of nearest neighbor leaves producing graphs with loops as in Proposition \ref{prop:graphloop} and finally by  summing over all permutations of $p_1,p_2,\dots, p_{1-2g+n}$.
\end{thm}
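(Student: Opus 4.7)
The plan is to assemble the statement from three ingredients already in place: the Loday--Ronco identity that realizes the sum of all planar binary trees of a fixed order as an iterated $\ast$-product of $\mathbf{(1)}$; the recursion map $\psi$ that sends $W^0_{n+2}$ to this sum (with labels permuted); and Proposition \ref{prop:graphloop}, which expresses every $g$-loop graph in $(Y^n)^g$ as the result of successive $_{i}\leftrightarrow_{i+1}$ operations applied to the planar binary trees of order $n$. Combining them with the genus-$g$ definition $\psi(W^g_k(p,p_1,\dots,p_{k-1}))=\sum_{t^g\in (Y^n)^g} t^g$ for $n=-\chi=2g+k-2$ should yield the theorem.

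First, I would recall the $\ast$-product description: by iterating the Loday--Ronco identities (\ref{eq:shuffleident1})--(\ref{eq:shuffleident2}), one shows by induction on $n$ that $\mathbf{(1)}^{\ast n} := \mathbf{(1)}\ast\mathbf{(1)}\ast\dots\ast\mathbf{(1)}$ ($n$ factors) equals $\sum_{t\in Y^n}t$, as already used in the genus-$0$ and genus-$1$ theorems. In particular, under $\psi^{-1}$ followed by the permutation sum on the leaf labels $p_1,\dots,p_{n+1}$, this produces $W^0_{n+2}(p,p_1,\dots,p_{n+1})$.

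Next, I would show that applying the loop-creating operations to this representation produces $W^g_{k}$ with $k=2-2g+n$. The idea is iterative: once $W^{g-1}_{k+2}$ is represented on the space spanned by $(Y^n)^{g-1}$, the operation $_{i}\leftrightarrow_{i+1}$ (summed over all admissible $i$) produces every genus-$g$ graph in $(Y^n)^g$ obtainable by a nearest-neighbor identification. By Proposition \ref{prop:graphloop} applied inductively to the decomposition $t=t_1\vee t_2$, every $t^g\in (Y^n)^g$ arises in this way, whether the new loop sits inside one branch (the ``same branch'' term $(t_1)^k\vee (t_2)^{g-k}$) or bridges the two branches (the ``opposite branches'' term $(t_1)^{g-1-k}\bridge (t_2)^k$). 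Coupled with the ungrafting description of the first term of (\ref{toprec}) and the Leibniz behavior of $_{i}\leftrightarrow_{i+1}$ with respect to $\vee$ described in the passage preceding Proposition \ref{prop:graphloop}, one recovers exactly the two summands of the Eynard--Orantin recursion at genus $g$.

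The main obstacle will be bookkeeping: the $_{i}\leftrightarrow_{i+1}$ operation applied to $t_1\vee t_2$ can double-count graphs whenever a new loop can equivalently be interpreted as arising from $t_1$ or $t_2$, and the ``weight $2$'' symmetric configurations of the preceding lemma must be absorbed without producing spurious multiplicities. I would handle this by following the explicit computation carried out just before Proposition \ref{prop:graphloop} for the two-loop case, and then arguing by induction on $g$: at each step, the redundancy introduced by applying $_{i}\leftrightarrow_{i+1}$ independently on the two branches is exactly cancelled by the symmetric-graph weighting, so that the raw sum over $(Y^n)^g$ in (\ref{eq:gloopgraph}) is obtained. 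Finally, summing over permutations of the remaining $1-2g+n$ leaf labels and taking the preimage under $\psi$ yields $W^g_{2-2g+n}$, completing the proof.
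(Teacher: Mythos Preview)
Your proposal is correct and follows essentially the same route the paper sets up: reduce to Theorem~1 (the identity $\mathbf{(1)}^{\ast n}=\sum_{t\in Y^n}t$ followed by the label permutation) and then invoke Proposition~\ref{prop:graphloop} to pass from $Y^n$ to $(Y^n)^g$ via iterated $_{i}\leftrightarrow_{i+1}$, with the double-counting and symmetric-graph weights handled exactly as in the discussion preceding that proposition. The paper itself does not spell out a separate proof here---it treats the theorem as an immediate consequence of Proposition~\ref{prop:graphloop} together with the genus-$0$ and genus-$1$ theorems, referring to \cite{1751-8121-48-44-445205} for details---so your write-up is, if anything, more explicit than what appears in the text.
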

\begin{rem}
After being ungrafted the second term in (\ref{eq:gloopgraph}) can still have leaves in opposite branches identified. In general this happens if $(t)^g=(t_1)^{g_1}\bridge (t_2)^{g_2}$ with $g_1+g_2=g-1$ and say $(t_1)^{g_1}$ has a decomposition $(t_1)^{g'_1}\bridge (t_2)^{g'_2}$ with ${g'}_1+{g'}_2=g_1-1$.
 \begin{defn2}
 See fig. \ref{fig:W21} for the graph representation of $W_1^2(p)$.
 \end{defn2}
  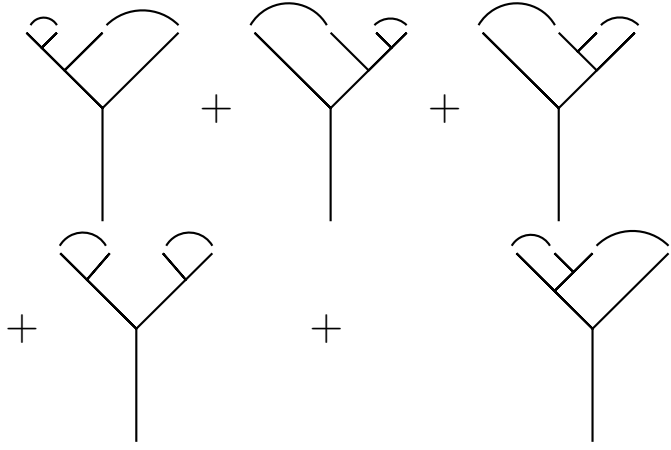
\begin{figure}[h]
 	\begin{tikzpicture}
 	\draw[thick] (1,-0.5) -- (1,1) -- (0,2) -- (0.2,1.8)--(0.4,2)--(0.2,1.8)-- (0.5,1.5)--(1,2)--(0.5,1.5)-- (1,1) -- (2,2);\draw (2.5,1) node{\textbf{{\Large $+$}}}; \draw[thick] (4,-0.5) -- (4,1) -- (3,2) -- (4,1) -- (5,2)--(4.8,1.8)--(4.6,2)--(4.8,1.8)--(4.5,1.5)--(4,2);\draw (5.5,1) node{\textbf{{\Large $+$}}}; \draw[thick] (7,-0.5) -- (7,1) -- (6,2) -- (7,1) -- (8,2)--(7.5,1.5)--(7.25,1.75)--(7.5,2)--(7.25,1.75)--(7,2);\draw[thick] (0.4,2.1) arc (30:150:0.2cm);\draw[thick] (2,2.1) arc (45:135:0.67cm);\draw[thick] (3.95,2.1) arc (30:150:0.58cm);\draw[thick] (5,2.1) arc (45:135:0.3cm);\draw[thick] (6.95,2.1) arc (30:150:0.58cm);\draw[thick] (8.05,2.1) arc (45:135:0.35cm);
 	\end{tikzpicture}
 	\begin{tikzpicture}
 	\draw (-0.5,1) node{\textbf{{\Large $+$}}};\draw[thick] (1,-0.5) -- (1,1) -- (0,2) -- (0.35,1.65)--(0.65,2)--(0.35,1.65)-- (1,1) --(1.65,1.65)--(1.35,2)--(1.65,1.65) -- (2,2);\draw (2.5,1);  \draw (3.5,1) node{\textbf{{\Large $+$}}}; \draw[thick] (7,-0.5) -- (7,1) -- (6,2) -- (6.5,1.5)--(6.75,1.75)--(6.5,2)--(6.75,1.75)--(7,2)-- (6.5,1.5) -- (7,1)-- (8,2);
 	\draw[thick] (0.6,2.1) arc (30:150:0.35cm);\draw[thick] (2,2.1) arc (30:150:0.35cm);\draw[thick] (6.44,2.1) arc (30:150:0.29cm);\draw[thick] (8,2.1) arc (45:135:0.67cm);
 	\end{tikzpicture} 
 	\caption{$W_1^2(p)$ The root label $p$ is omitted. Reprinted with permission from \cite{1751-8121-48-44-445205}.}\label{fig:W21}
 \end{figure}
\end{rem}
\section{The antipode}
In a graded connected Hopf Algebra there is a canonical antipode $S$ whose expression is given by the convolution inverse of the identity:
\begin{equation}
m\left(S\otimes \text{Id}\right)\Delta = m\left(I\otimes \text{S}\right)\Delta=\eta\cdot\epsilon
\end{equation}
with $m$ the product, $\Delta$ the co-product, $\eta$ the unit and $\epsilon$ the co-unit. Explicitly, in the Loday-Ronco Hopf Algebra, we have
\begin{equation}
S(t)=-t-S(t_1)\ast t_2
\end{equation}
where in Sweedler notation
\begin{equation}
\Delta t = \sum t_1\otimes t_2
\end{equation}
is the co-product in $k[Y^\infty]$ induced by (\ref{eq:coproductperm}).
For instance, $S(\mathbf{1})=-\mathbf{1}$ because $\mathbf{1}$ is primitive and $S(\mathbf{12})=(\mathbf{21})$ and also $S(\mathbf{21})=(\mathbf{12})$. This suggests that a map $\psi^\ast S$ induced by the antipode on the vector space of correlations functions should give 
$$(\psi^\ast S)(W^0_4)=W^0_4.$$
More generally, from $S((\mathbf{1}))=-(\mathbf{1})$ we see that
$$S((\mathbf{1})\ast(\mathbf{1})\ast\dots\ast(\mathbf{1}))=(-1)^n(\mathbf{1})\ast(\mathbf{1})\ast\dots\ast(\mathbf{1})$$
with $n$ factors in the $\ast$ product and then $$(\psi^\ast S)(W_{n+2}^0)=(-1)^nW_{n+2}^0.$$
Since $n$ is identified with the Euler characteristic we see that the induced map respects the grading of $k[Y^\infty]$ for $g=0$.

\section{Discussion and other topics}
This section contains some new material which may not be in a complete final form. 

We would like to give a more direct approach to Topological Recursion in genus 0 in terms of the co-product of the Loday-Ronco Hopf algebra and analyze the algebraic structure of the space of correlation functions. We start by recalling that the elementary tree $\tree$ is the same as the elementary permutation $(\mathbf{1})$ and that this tree is primitive in the Loday-Ronco Hopf algebra: 
\begin{equation}
\Delta\tree=1\otimes\tree+\tree\otimes 1.
\end{equation}
Now by Theorem 1 a solution of Topological Recursion of Euler characteristic $-n$ in genus 0 is given by the product of $n$ terms of $(\mathbf{1})$, modulo the permutations of leaves labels, and since the co-product is a homomorphism we have that
\begin{align}
\Delta \left((\mathbf{1})^n\right)&=\left(\Delta (\mathbf{1})\right)^n\notag\\
&=\left(1\otimes(\mathbf{1})+(\mathbf{1})\otimes 1\right)^n\notag\\
&=\sum_{k=0}^n C^n_k (\mathbf{1})^k\otimes(\mathbf{1})^{n-k}
\end{align}
where in the powers the Loday-Ronco $\ast$ product is understood. If we absorb the factor $n!$ in a normalization of $n$ factors of $(\mathbf{1})$ then the co-product is
\begin{equation}\label{eq:co-productpower}
\Delta \left((\mathbf{1})^n\right)=\sum_{k=0}^n (\mathbf{1})^k\otimes(\mathbf{1})^{n-k}
\end{equation}
This is exactly the same situation that appears in the elementary Hopf algebra $k[X]$ of polynomials in one variable $X$ over some field $k$ of characteristic 0. By declaring the variable X to be primitive and normalizing $X^n$ with $n!$ we get
\begin{equation}
\Delta \left(\frac{X^n}{n!}\right)=\sum_{k=0}^n\frac{X^k}{k!}\otimes \frac{X^{n-k}}{(n-k)!}
\end{equation}
With the map $\psi$ we can translate this to the language of correlation functions in genus zero $W^0_{n+2}(p,p_1,\dots,p_{n+1})$ with Euler characteristic $\chi=-n$: 
\begin{align}\label{co-productcorr}
\Delta \left(W^0_{n+2}(p,p_1,\dots,p_{n+1})\right)&=\notag\\
\sum_{k=0}^n W^0_{k+2}(p,p_1,\dots,p_k,\bar{q})&\otimes W^0_{n-k+2}(\bar{q},p_{k+1},\dots,p_{n+1})\notag\\
+\sum_{k=0}^n &W^0_{k+2}(q,p_{1},\dots,p_{k+1})\otimes W^0_{n-k+2}(p,q,p_{k+2},\dots,p_{n+1})
\end{align}
where a normalization of $n!$ for $W^0_{n+2}$ is implicit. An explanation for the extra labels $q,\bar{q}$ and also for the two sums is in order. Note that the co-product in (\ref{eq:co-productpower}) just amounts to cut the product of $n$ $(\mathbf{1})$'s into two consecutive factors. At the level of trees this amounts to cut a complex tree into two more elementary trees, where the cut is done along an internal vertex, giving a tree that contains the original root and a new leaf just below the cut and another tree that contains a new root just above the cut and some leaves of the original tree. In this way, every time a tree is cut along an internal vertex we assign the label $q$ if the new leaf is left oriented and $\bar{q}$ if it is right oriented. Moreover the new root of the second tree is also labeled $q$ and $\bar{q}$ respectively. 
\begin{figure}
	\begin{tikzpicture}[scale=1, transform shape]
	\draw (0,0)--(0,1)--(-0.5,1.5)--(0,1)--(0.5,1.5);\draw (0,-0.3) node{ $p$};\draw (1.65,1.85) node{ $W^0_{k+1}(\bar q,p_1,\dots,p_k)$};\draw (-0.5,1.8) node{ $q$};
	\end{tikzpicture}\caption{$W^0_{k+2}(p,q,p_1,\dots,p_k)$}\label{fig:W0kpq}
\end{figure}
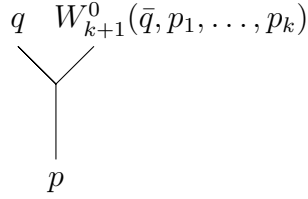
That is why it is necessary to take into account the two decompositions, say a left and a right one. Hence in our approach the object $W^0_{k+2}(p,q,p_1,\dots,p_k)$ that contains simultaneously the root $p$ and an internal leaf label $q_i$ (or $\bar q_i$) is an intermediate object between a correlation function and a recursion kernel in the sense that the internal label does not couple to a cylinder, see fig. \ref{fig:W0kpq} 
As an example take for instance $W^0_4(p,p_1,p_2,p_3)$. We get
\begin{align}\label{coprodcutW04}
\Delta W^0_4(p,p_1,p_2,p_3)&=1\otimes W^0_4(p,p_1,p_2,p_3)+W^0_4(p,p_1,p_2,p_3)\otimes 1+\notag\\
 W^0_3(p,p_1,\bar{q})\otimes & W^0_3(\bar{q},p_2,p_3)+W^0_3(q,p_1,p_2)\otimes W^0_3(p,q,p_3).
\end{align}
\begin{figure}
	\begin{tikzpicture}[scale=2, transform shape]
	\draw[thick] (6.25,0.75) -- (6.25,1) -- (6,1.25) -- (6.125,1.125)--(6.25,1.25)--(6.125,1.125) -- (6.25,1) -- (6.5,1.25);
	\draw[thick] (7,0.75) -- (7,1) -- (6.75,1.25) -- (7,1)  -- (7.125,1.125)--(7,1.25)--(7.125,1.125)--(7.25,1.25); \draw (6.625,1) node{ +};
	\end{tikzpicture}
	\caption{Representation of $W^0_4(p,p_1,p_2,p_3)$. Root and leaves labels are not shown.}
	\label{fig:W04}
\end{figure}
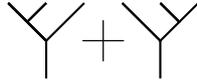
If we look at the two trees in fig. \ref{fig:W04} we see that only the tree $(\mathbf{21})$ admits a decomposition like the third term and only the tree $(\mathbf{12})$ admits a decomposition like the fourth term.
Next translating $W^0_3(p,p_1,q)$ into the expression
\begin{equation}
W^0_3(p,p_1,q)=K_p(q,\bar{q})W^0_2(\bar{q},p_1)
\end{equation}
we get the terms of Topological Recursion for $W^0_4(p,p_1,p_2,p_3)$. Note that $W^0_3(p,p_1,q)$ carries an internal index that results from an internal cut, so this index does not couple to a cylinder.

Now is easy to see that on one hand the expression (\ref{co-productcorr}) gives all the terms of the first iteration of Topological Recursion in genus 0 for $W^0_{n+2}$ and on the other hand that by successive iterations of this co-product we end up with the complete decomposition of the solution in terms of recursion kernels. More specifically what we need is the reduced co-product $\Delta'$ that is equal to the co-product minus the primitive part:
\begin{equation}
\Delta' a=\Delta a -1\otimes a-a\otimes 1
\end{equation}
for a generic element $a$ in some Hopf algebra. Note that $\Delta'(a)=0$ for a primitive element $a$, and also by definition $\Delta'1=0$. In particular,
\begin{equation}
\Delta'W^0_3(p,p_1,p_2)=0.
\end{equation}
Then we take the iteration $\Delta'^{(n)}$:
\begin{align}
\Delta'^{(1)}&=\Delta'\notag\\
\Delta'^{(2)}&=(\Delta'\otimes\text{Id})\otimes \Delta'^{(1)}\notag\\
\Delta'^{(n)}&=(\Delta'\otimes\text{Id})\Delta'^{(n-1)}.
\end{align}
Let us take a product of $n$ terms of $(\mathbf{1})$ and apply $\Delta'^{(n-1)}$:
\begin{align}
\Delta'\left((\mathbf{1})^n\right)&=\left(1\otimes(\mathbf{1})+(\mathbf{1})\otimes 1\right)^n-(\mathbf{1})^n\otimes 1-1\otimes(\mathbf{1})^n\notag\\
&=\sum_{k=1}^{n-1}(\mathbf{1})^k\otimes (\mathbf{1})^{n-k};\notag\\
\Delta'^{(2)}\left((\mathbf{1})^n\right)&=(\Delta'\otimes\text{Id})\left(\sum_{k_1=1}^{n-1}(\mathbf{1})^{k_1}\otimes (\mathbf{1})^{n-k_1}\right)\notag\\
&=\left(\sum_{k_1=2}^{n-1}\sum_{k_2=1}^{k_1-1}(\mathbf{1})^{k_2}\otimes (\mathbf{1})^{k_1-k_2}\otimes(\mathbf{1})^{n-k_1}\right)\notag\\
\Delta'^{(n-1)}\left((\mathbf{1})^n\right)&=(\mathbf{1})\otimes(\mathbf{1})\otimes\dots\otimes(\mathbf{1}), n\text{ factors}.
\end{align}
where we have absorbed the combinatorial factors on a normalization of $(\mathbf{1})$ factors. However the simplicity of this expression is misleading because we are ignoring the root and leaves labels. In fact all $(\mathbf{1})$'s represent recursion kernels that carry coupled internal indexes.

Now that we have discussed the co-product at the level of correlation functions of genus 0 and not at the level of planar binary trees and have given the decomposition of Topological Recursion in terms of recursion kernels using this co-product, we'd better see if there is a product in the first place. But this is immediately induced by our solution stated by Theorem 1. Given two correlation functions $W^0_{k+2}$ and $W^0_{l+2}$ (here we let fall the map $\psi$ which we implicitly assume)
\begin{align}
&W^0_{k+2}(p_1,\dots,p_{k+1})=(\mathbf{1})\ast(\mathbf{1})\ast\dots\ast(\mathbf{1}), k\text{ factors},\notag\\
&W^0_{l+2}(p_1,\dots,p_{l+1})=(\mathbf{1})\ast(\mathbf{1})\ast\dots\ast(\mathbf{1}), l\text{ factors},
\end{align}
then their product is
\begin{align}\label{eq:productcorr}
&W^0_{k+2}(p,p_1,\dots,p_{k+1})\ast W^0_{l+2}(p',p'_1,\dots,p'_{l+1})\notag\\ &=W^0_{m+2}(p,p_1,\dots,p_{m+1})=(\mathbf{1})\ast(\mathbf{1})\ast\dots\ast(\mathbf{1}), m=k+l\text{ factors}
\end{align}
which is clearly associative, commutative and with the unit $W^0_2(p,p_1)$. However we are ignoring the role of the root and leaves labels. It is possible to generalize this product in order to take into account the labels. First we consider the product of a cylinder $W^0_2(p,p_1)$ and a generic correlation function $W^0_{n+2}(p',p'_1,\dots,p'_{n+1})$:
\begin{align}
W^0_2(p,p_1)\ast W^0_{n+2}(p',p'_1,\dots,p'_{n+1})=&W^0_{n+2}(p',p'_1,\dots,p'_{n+1})\ast W^0_2(p,p_1)\notag\\
=& W^0_{n+2}(p,p'_1,\dots,p'_{n+1})
\end{align}
This amounts to make $p'=p_1$ and replace the root label of $W^0_{n+2}$ by $p$. Next we use the expression (\ref{eq:shuffleident1}) for the product of two planar binary trees in order to get the product of two $W^0_3$:
\begin{align}
&W^0_3(p,p_1,p_2)\ast W^0_3(p',p'_1,p'_2)=\notag\\
&W^0_2(q,p_1)\vee (W^0_2(\bar q,p_2)\ast W^0_3(p',p'_1,p'_2)\notag\\
&+(W^0_3(p,p_1,p_2)\ast W^0_2(q',p'_1))\vee W^0_2(\bar q',p'_2)\notag\\
&=W^0_2(q,p_1)\vee W^0_3(\bar q,p'_1,p'_2)+W^0_3(q',p_1,p_2)\vee W^0_2(\bar q',p'_2)\notag\\
&=K_p(q,\bar q)W^0_2(q,p_1) W^0_3(\bar q,p'_1,p'_2)+K_{p'}(q',\bar q')W^0_3(q',p_1,p_2)W^0_2(\bar q',p'_2),
\end{align}
where we have used the decomposition 
\begin{equation}
W^0_3(p,p_1,p_2)=K_p(q,\bar q)W^0_2(q,p_1)W^0_2(\bar q,p_2)
\end{equation} 
In the particular case of equal root labels and $p'_1=p_2,p'_2=p_3$, we get
\begin{equation}
W^0_3(p,p_1,p_2)\ast W^0_3(p,p_2,p_3)=W^0_4(p,p_1,p_2,p_3).
\end{equation}
Note that this product is no longer commutative and is additive on the Euler characteristic. Finally we define in a similar way the product for two generic correlation functions of $g=0$:

\begin{align}
&W^0_{l+2}(p,p_1,\dots,p_{l+1})\ast W^0_{m+2}(p',p^{'}_{1}\dots,p^{'}_{m+1})=\notag\\
&\delta_{p,p'}\delta_{p_{l+1},p'_{1}}W^0_{k+2}(p,p_1,\dots,p_{k+1}),k=l+m.
\end{align}
 with the redefinition $p_{l+2}=p^{'}_{2},\dots,p_{l+m+1}=p^{'}_{m+1}$. We show that this product is compatible with Topological Recursion. Start with
 \begin{align}
 &W^0_{l+2}(p,p_1,\dots,p_{l+1})=\notag\\
 &K_p(q,\bar q)\sum_{k=0}^{l-1}W^0_{k+2}(q,\dots,p_{k+1})W^0_{l-k+1}(\bar q,p_{k+2},\dots,p_{l+1})\notag\\
 &=\underset{t_1\in Y^p,t_2\in Y^q, p+q+1=l}{\sum} t_1\vee t_2
\end{align}
and similarly for $W^0_{m+2}(p,p_1,\dots,p_{m+1})$. In terms of planar trees and using (\ref{eq:shuffleident1}) we get
 \begin{align}
 &W^0_{l+2}(p,p_1,\dots,p_{l+1})\ast W^0_{m+2}(p',p^{'}_{1}\dots,p^{'}_{m+1})=\notag\\
 &\underset{t_1\in Y^p,t_2\in Y^q, p+q+1=l}{\sum} \underset{t^{'}_1\in Y^{p'},t^{'}_2\in Y^{q'}, p'+q'+1=m}{\sum}t_1\vee (t_2\ast t')+(t\ast t^{'}_1)\vee t^{'}_2.
 \end{align}
 Translating this into correlation functions,
 \begin{align}
  &W^0_{l+2}(p,p_1,\dots,p_{l+1})\ast W^0_{m+2}(p',p^{'}_{1}\dots,p^{'}_{m+1})=K_p(q,\bar q)\times\notag\\
 &\left(\sum_{k=0}^{l-1}W^0_{k+2}(q,p_1,\dots,p_{k+1})\vee\right.\notag\\
 &\left.\left(W^0_{l-k+1}(\bar q,p_{k+2},\dots,p_{l+1})\ast W^0_{m+2}(p',p^{'}_{1}\dots,p^{'}_{m+1})\right)\right)\notag\\
 &+K_{p'}(q',\bar q')\times\notag\\
 &\left(\sum_{n=0}^{m-1}\left(W^0_{l+2}(p,p_1,\dots,p_{l+1})\ast W^0_{n+2}(q',p^{'}_1,\dots,p^{'}_{n+1})\right)\vee \right.\notag\\
 &\left.W^0_{m-n+1}(\bar q',p^{'}_{n+2},\dots,p^{'}_{m+1})\right)\notag\\
 &=K_p(q,\bar q)\times\notag\\
 &\left(\sum_{k=0}^{l-1}W^0_{k+2}(q,p_1,\dots,p_{k+1})W^0_{l+m-k+1}(\bar q,p_{k+2},\dots,p_{l+m+1})\right.\notag\\
 &\left.+\sum_{n=0}^{m-1}W^0_{l+n+2}(q,p_1,\dots,p_{l+n+1})W^0_{m-n+1}(\bar q,p^{'}_{n+2},\dots,p^{'}_{m+1})\right)\notag\\
 &=K_p(q,\bar q)\times\notag\\
 &\left(\sum_{k=0}^{l-1}W^0_{k+2}(q,p_1.\dots,p_{k+1})W^0_{l+m-k+1}(\bar q,p_{k+2},\dots,p_{l+m+1})\right.\notag\\
 &\left.+\sum_{k=l}^{m+l-1}W^0_{k+2}(q,p_1,\dots,p_{k+1})W^0_{l+m-k+1}(\bar q,p^{'}_{k-l+2},\dots,p^{'}_{m+1})\right)\notag\\
 &=W^0_{l+m+2}(p,p_{1},\dots,p_{l+m+1})
 \end{align}
 for $p=p'$ and with obvious redefinitions of the variables.
 Therefore we recover (\ref{eq:productcorr}) and confirm the fact that the space of correlation functions has an algebra structure. It is also a bi-algebra as the product and the co-product are induced by these same structures in the Loday-Ronco Hopf algebra. The antipode has been discussed in the previous section so this algebra is indeed a Hopf algebra.
\subsection{The Connes-Kreimer Hopf algebra}
Here we won't have the opportunity to give a detailed exposition of the intricacies of the Renormalization procedure in Quantum Field Theory \cite{collins1984renormalization} and how the combinatorics of this procedure are well described by Connes-Kreimer Hopf algebra. A good review can be found in \cite{MR1725011}. We just mention that the space of this Hopf algebra is the vector space generated by rooted trees over $\Rational$, that are graphs with no loops, with a preferred vertex called the root that has only child edges, and with vertices's of any order. These trees are not planar and the product in the algebra is just disjoint union. For instance, the elementary tree $\bullet$ describes any Feynman diagram that is divergent but that has no subdivergent diagrams, that are subdiagrams of a more complex diagram that are by themselves divergent and whose divergent integral is a function of the integration variable of the diagram to which they are a part of\footnote{We should say in passing that a regularization procedure is necessary to give a meaning to the manipulation of these integrals and this is tacitly assumed.}. This situation goes on order-by-order in perturbation theory, or equivalently in loop order, giving more and more nested subdiagrams and subdivergences. The first example of a nested divergence is the tree $\treetwo$ that describes any divergent Feynman diagram with a subdivergent diagram that doesn't have by itself any subdiagram. Any rooted tree of any order with an arbitrary number of branches can be used to describe some complicated divergent integrals associated to some Feynman diagrams. See for instance \cite{Brouder:1999gk,Brouder:1999za,BROUDER2003298}. Note that a particularly interesting aspect is the fact that rooted trees in the Connes-Kreimer algebra describe what are known as One Particle Irreducble (OPI) Feynman diagrams. These are connected diagrams that represent the Legendre transform of connected correlation functions. Its is well known that in Quantum Field Theory connected correlation functions are generated by the free energy $F$ while products of connected correlation functions are generated by the partition function $Z$: 
\begin{equation}
Z=\exp F.
\end{equation}
We will come back to this later.

There is a generalization due to Foissy \cite{MR1909461} of the Connes-Kreimer Hopf algebra called the noncommutative Connes-Kreimer Hopf algebra that considers planar rooted trees. The product is ordered disjoint union and is no longer commutative as for instance one distinguishes $\bullet \treetwo$ from $\treetwo \bullet$.
The article \cite{MR2194965} describes a nice inductive map $\phi$ to compute the planar binary tree that corresponds to a planar rooted tree or a planar forest in the noncommutative Connes-Kreimer Hopf algebra. We need to introduce two operations $\backslash$ and $/$ on trees $t_1$ and $t_2$, $t_1\backslash t_2$ and $t_1/t_2$. The first amounts to graft $t_2$ on the right most leaf of $t_1$ and the second is the grafting of $t_1$ on the left most leaf of $t_2$, see fig. \ref{fig:backslash} for a simple example (there are corresponding operations on permutations that can also be found in \cite{MR2194965}).
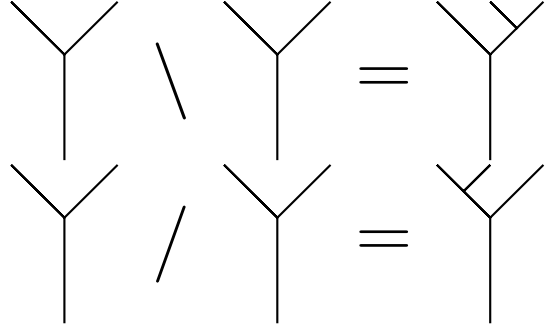
\begin{figure}
	\begin{tikzpicture}[scale=1.4, transform shape]
	\draw[thick](1,0) -- (1,1) -- (0.5,1.5) -- (1,1) -- (1.5,1.5) ; 
	\draw (2,.75) node{ \LARGE  $\backslash$};
	\draw[thick](3,0) -- (3,1) -- (2.5,1.5) -- (3,1) -- (3.5,1.5) ; 
	\draw (4,.75) node{ \LARGE  $=$};
	\draw[thick](5,0) -- (5,1) -- (4.5,1.5) -- (5,1) -- (5.25,1.25)--(5,1.5)--(5.25,1.25) -- (5.5,1.5) ; 
	\end{tikzpicture}
	\begin{tikzpicture}[scale=1.4, transform shape]
	\draw[thick](1,0) -- (1,1) -- (0.5,1.5) -- (1,1) -- (1.5,1.5) ; 
	\draw (2,.75) node{ \LARGE  $/$};
	\draw[thick](3,0) -- (3,1) -- (2.5,1.5) -- (3,1) -- (3.5,1.5) ; 
	\draw (4,.75) node{ \LARGE  $=$};
	\draw[thick](5,0) -- (5,1) --(4.75,1.25)--(5,1.5) -- (4.75,1.25)--(4.5,1.5) -- (5,1) -- (5.25,1.25)-- (5.5,1.5) ; 
	\end{tikzpicture}
\caption{The $\backslash$ and $/$ operation on trees.}\label{fig:backslash}
\end{figure}
First, $\phi(\emptyset)=|$. Next, removing the root from a planar rooted tree $t$ gives a planar forest $f=(t_1,t_2,\dots,t_n)$ with each $t_i$ a planar rooted tree and we set $\phi(t)=\phi(f)/\tree$. Finally for a forest $f$ we do
$\phi(f)=\phi(t_1)\backslash\phi(t_2)\backslash...\backslash\phi(t_n)$.
For instance, $\phi(\bullet)=\tree$, $\phi(\treetwo)=\phi(\bullet)/\phi(\bullet)=(\mathbf{12})$, $\phi(\bullet\bullet)=\phi(\bullet)\backslash\phi(\bullet)=(\mathbf{21})$.

Now exponentiate the elementary tree $\mathbf{(1)}$ multiplied by a constant $g$ that we would like to see as a coupling constant in some physical theory:
\begin{align}
\exp(g\tree)&=1+g\tree+\frac{g^2}{2}\tree\ast\tree\notag\\
&+\frac{g^3}{3!}\tree\ast\tree\ast\tree+\dots\notag\\
&\underset{\phi^{-1}}{\longrightarrow}1+g\bullet +\frac{g^2}{2}\left(\treetwo+\bullet\bullet\right)\notag\\
&+\frac{g^3}{3!}\left(\treethree+\treethreetwo+\treetwo\bullet+\bullet\treetwo+\bullet\bullet\bullet\right)+\dots
\end{align}
We see that the exponential series of $\tree$ generates at some specific order in $g$ all possible divergences in loops of the Connes-Kreimer Hopf algebra. Remember that this algebra is in correspondence with nested divergences of the OPI diagrams, that are connected and have a definite loop order, and so is directly related with the free energy $F$. Also the elementary tree $\bullet$ may represent any OPI diagram of any loop order that does not have subdivergences. On the other hand the Loday-Ronco algebra seems to generate all divergences in all possible loop orders compatible with a specific order in the coupling constant $g$. Hence it appears to be related with the expansion of the partition function $Z$ in terms of (not connected) correlation functions of some physical theory. We would like to clarify this in future work.
\subsection{Genus higher than 0}
We start with the space of correlation functions of genus 0 that are solutions of Topological Recursion and which we denote by $\textbf{Corr}$.  We consider it as the direct sum of spaces of correlation functions of specific Euler characteristic $\chi=-n$, which we denote by $\textbf{Corr}^{\mathbf{(n)}}$:
\begin{equation}
\textbf{Corr}=\overset{\infty}{\underset{n=0}{\oplus}}\textbf{Corr}^{\mathbf{(n)}}
\end{equation}
We have seen already that $\textbf{Corr}$ carries an algebra structure with the $\ast$ product given by (\ref{eq:productcorr}). We see it as a classical space and consider its quantization $\textbf{Corr}_{\mathbf{h}}$ to be the space of correlation functions with $g>0$. In the same way we have
\begin{equation}
\textbf{Corr}_{\mathbf{h}}=\overset{\infty}{\underset{n=0}{\oplus}}\textbf{Corr}^{{\mathbf{(n)}}}_{\mathbf{h}}
\end{equation}
$\textbf{Corr}_{\mathbf{h}}^{\mathbf{(n)}}$ is by definition the free module generated by correlation functions $W^g_k(p,p_1,\dots,p_{k-1})$ with a fixed Euler characteristic $\chi=-n$ and varying genus $g>0$, up to the consistency of the formula $\chi=2-2g-k$, over the ring $k[[]h]]$ of formal power series in a parameter $h$ with coefficients in a field $k$ of characteristic 0. In this way every correlation function $W^{(n)}\in\textbf{Corr}_{\mathbf{h}}^{\mathbf{(n)}}$ admits an expansion in terms of $h$. For $\chi=-n$ even,
\begin{align}
W^{(n)}=&W^0_{n+2}(p,p_1,\dots,p_{n+1})+hW^1_{n}(p,p_1,\dots,p_{n-1})+\notag\\
&\dots +h^{n/2}W^{n/2}_2(p,p_1).
\end{align}
and for $\chi=-n$ odd,
\begin{align}
W^{(n)}=&W^0_{n+2}(p,p_1,\dots,p_{n+1})+hW^1_{n}(p,p_1,\dots,p_{n-1})+\notag\\
&\dots +h^{(n+1)/2}W^{(n+1)/2}_1(p).
\end{align}
We would like to describe a procedure of passing from $\textbf{Corr}$ to $\textbf{Corr}_{\mathbf{h}}$ that encodes Topological Recursion for arbitrary genus. Hopefully this will give some insight to what would be the good direction towards the quantization of the Loday-Ronco Hopf algebra.

First we obtain the one-loop graph of order one, $t^1\in (Y^1)^1$ as a result of applying a quantization map $Q$ to the tree $(\mathbf{1})\in Y^1$:
\begin{equation}
Q(\tree)=\oneloop.
\end{equation}
We also set $Q(|)=0$.
As a concrete model we can take the Teichmuller space of Riemann surfaces, take $(\mathbf{1)}=\tree$ as a pair of pants and $\oneloop$ as the result of attaching a cylinder to two of the borders of $\tree$.
Now we obtain by induction the first order quantization of a tree $t\in Y^n, t=t_1\vee t_2$:
\begin{equation}
Q(t)=Q(t_1)\vee t_2+t_1\vee Q(t_2) +t_1\bridge t_2.
\end{equation}
By successive applications of $Q$ we get graphs with more and more loops starting from a planar binary tree, that correspond to higher order terms in quantization, up to exhausting the available pairs of leaves. Hence the operator $Q$ is nilpotent.
For instance, the quantization of $\onetwo=\tree\vee |$ is
\begin{equation}
Q(\onetwo)=\onetwoloopone+\onetwolooptwo.
\end{equation}
It is clear that $Q^2\left(\onetwo\right)=0$.
We define three more operators $Q_L, Q_M$ and $Q_R$:
\begin{align}
&Q_L(t)=Q(t_1)\vee t_2,\\
&Q_M(t)=t_1\bridge t_2,\\
&Q_R(t)=t_1\vee Q(t_2).
\end{align}
We see that
\begin{equation}
Q(t)=Q_L(t)+Q_M(t)+Q_R(t).
\end{equation}
We would like in the future to study more deeply the quantization operator $Q$ and its derivate operators $Q_L, Q_M$ and $Q_R$.

Next we define a quantum product of two $\tree$ as a first order expansion in $h$:
\begin{align}
\tree\ast_h \tree&= \tree\ast\tree +hQ\left(\tree\ast\tree\right)\notag\\
&=\onetwo +\twoone +h\left(\onetwoloopone+\onetwolooptwo+\twooneloopone+\twoonelooptwo\right).
\end{align}
Then we see that
\begin{equation}
\tree\ast_h \tree= \tree\ast\tree +h\left(\sum_{i=1}^{2} \prescript{}{i}\leftrightarrow_{i+1}\left((\mathbf{1})\ast(\mathbf{1})\right)\right).
\end{equation}
The next step would be to define the product between $\tree$ and $\oneloop$. We do it in a way such that
\begin{equation}
\tree\ast_h \tree=\tree\ast\tree + h (\tree\ast_h\oneloop+\oneloop\ast_h\tree)
\end{equation}
We use the formula (\ref{eq:shuffleident1}) to set
\begin{align}\label{eq:treeoneloop}
\tree\ast_h\oneloop&=|\vee\oneloop+\tree\bridge|\notag\\
&=\twoonelooptwo+\onetwolooptwo.
\end{align}
The formula for $\oneloop\ast_h\tree$ should now be obvious.

Having settled the low order cases we now move to the general case. We take the product of a graph $t^{g_1}\in (Y^p)^{g_1}$ with a graph $t^{g_2}\in (Y^q)^{g_2}$ in such a way as to give the $h^{g_1+g_2}$ contribution to the quantum product $t_1\ast_h t_2$, for $t_1\in Y^p$ and $t_2\in Y^q$ the underlying trees of $t_1^{g_1}$ and $t_2^{g_2}$:
\begin{equation}
t_1\ast_h t_2=t_1\ast t_2+\dots + h^{g_1+g_2}(t_1^{g_1}\ast_h t_2^{g_2}+t_1^{g_2}\ast_h t_2^{g_1})+\dots
\end{equation}
Reversing the argument we see that the quantum product becomes a sum over the genus of graphs:
\begin{equation}
t_1\ast_h t_2=\sum_{k=0}^{(p+q-k_{\text{min}})/2+1}\sum_{l=0}^kh^kt_1^l\ast_h t_2^{k-l}, t_1^l\in (Y^p)^l,t_2^{k-l}\in (Y^q)^{k-l}
\end{equation}
where $k_\text{min}$ is $1\text{ or } 2$.
By summing over all planar binary trees and the corresponding graphs with loops we get the corresponding formula for the quantum product of correlation functions:
\begin{align}
&W^0_{m+2}\ast_h W^0_{n+2}=\notag\\
&\sum_{k=0}^{(m+n-k_{\text{min}})/2+1}\sum_{l=0}^k h^kW^l_{m+2-2l}\ast_h W^{k-l}_{n+2-2(k-l)}
\end{align}
In terms of correlation functions $W^{(n)}\in\textbf{Corr}^{\mathbf{(n)}}_{\mathbf{h}}$ the above examples give
\begin{align}
W^{(1)}&=W^0_3(p,p_1,p_2)+hW^1_1(p)\notag\\
&=\tree + h \oneloop\\
W^{(2)}&=W_4^0(p,p_1,p_2,p_3)+hW^1_2(p,p_1)\notag\\
&=\onetwo+\twoone+h\left(\onetwoloopone+\onetwolooptwo+\twooneloopone+\twoonelooptwo\right)
\end{align}
Now giving the module $\textbf{Corr}^{\mathbf{(n)}}_{\mathbf{h}}$ the obvious ring structure that consists in multiplying term by term the expansion in $h$ we have
\begin{align}
W^{(2)}&=W^{(1)}\cdot W^{(1)}\notag\\
&=W^0_3(p,p_1,p_2)\ast W^0_3(p',p^{'}_1,p^{'}_2)\notag\\
&+h\left(W^1_1(p)\ast_h W^0_3(p',p^{'}_1,p^{'}_2)+W^0_3(p,p_1,p_2)\ast_h W^1_1(p')\right).
\end{align}
This allows to identify the terms in the expansion of $W^{(2)}$:
\begin{align}
&W^0_4(p,p_1,p_2,p_3)=W^0_3(p,p_1,p_2)\ast W^0_3(p,p_2,p_3)\notag\\
&=K_p(q,\bar q)\left(W^0_3(q,p_1,p_2)W^0_2(\bar q,p_3)+W^0_2(q,p_1)W^0_3(\bar q,p_2,p_3)\right);\\
&W^1_2(p,p_1)=W^1_1(p)\ast_h W^0_3(p,p_1,p_2)+W^0_3(p,p_1,p_2)\ast_h W^1_1(p)\notag\\
&=K_p(q,\bar q)\left(W^0_3(q,\bar q,p_1)+W^1_1(q)W^0_2(\bar q,p_1)+W^0_2(q,p_1)W^1_1(\bar q)\right).
\end{align}
where we have used (\ref{eq:treeoneloop}) and the corresponding formula for $\oneloop\ast_h\tree$. Hence the term-by-term product of $W^{(1)}$ by itself gives the corresponding terms in the Topological Recursion formula for the components of $W^{(2)}$.
\bibliographystyle{amsplain}
\bibliography{biblioHopf}

\providecommand{\bysame}{\leavevmode\hbox to3em{\hrulefill}\thinspace}
\providecommand{\MR}{\relax\ifhmode\unskip\space\fi MR }
\providecommand{\MRhref}[2]{%
  \href{http://www.ams.org/mathscinet-getitem?mr=#1}{#2}
}
\providecommand{\href}[2]{#2}
\begin{thebibliography}{10}

\bibitem{MR2194965}
Marcelo Aguiar and Frank Sottile, \emph{Structure of the {L}oday-{R}onco {H}opf
  algebra of trees}, J. Algebra \textbf{295} (2006), no.~2, 473--511.

\bibitem{atiyah1988topological}
Michael~F Atiyah, \emph{Topological quantum field theory}, Publications
  Math{\'e}matiques de l'IH{\'E}S \textbf{68} (1988), 175--186.

\bibitem{Brouder:1999gk}
C.~Brouder, \emph{{Runge-Kutta methods and renormalization}}, Eur.Phys.J.
  \textbf{C12} (2000), 521--534.

\bibitem{Brouder:1999za}
Christian Brouder, \emph{{On the trees of quantum fields}}, Eur.Phys.J.
  \textbf{C12} (2000), 535--549.

\bibitem{BROUDER2003298}
Christian Brouder and Alessandra Frabetti, \emph{Qed hopf algebras on planar
  binary trees}, Journal of Algebra \textbf{267} (2003), no.~1, 298 -- 322.

\bibitem{collins1984renormalization}
J.C. Collins, \emph{Renormalization: An introduction to renormalization, the
  renormalization group and the operator-product expansion}, Cambridge
  University Press, 1984.

\bibitem{MR1725011}
A.~Connes and D.~Kreimer, \emph{Hopf algebras, renormalization and
  noncommutative geometry}, Quantum field theory: perspective and prospective
  ({L}es {H}ouches, 1998), NATO Sci. Ser. C Math. Phys. Sci., vol. 530, Kluwer
  Acad. Publ., Dordrecht, pp.~59--108.

\bibitem{DiFrancesco:1993nw}
P.~Di~Francesco, Paul~H. Ginsparg, and Jean Zinn-Justin, \emph{{2-D Gravity and
  random matrices}}, Phys.Rept. \textbf{254} (1995), 1--133.

\bibitem{MR3087960}
Olivia Dumitrescu, Motohico Mulase, Brad Safnuk, and Adam Sorkin, \emph{The
  spectral curve of the {E}ynard-{O}rantin recursion via the {L}aplace
  transform}, Algebraic and geometric aspects of integrable systems and random
  matrices, Contemp. Math., vol. 593, Amer. Math. Soc., Providence, RI, 2013,
  pp.~263--315.

\bibitem{1751-8121-48-44-445205}
Jo{\~a}o~N. Esteves, \emph{Hopf algebras and topological recursion}, Journal of
  Physics A: Mathematical and Theoretical \textbf{48} (2015), no.~44, 445205.

\bibitem{MR2346575}
B.~Eynard and N.~Orantin, \emph{Invariants of algebraic curves and topological
  expansion}, Commun. Number Theory Phys. \textbf{1} (2007), no.~2, 347--452.

\bibitem{MR1909461}
L.~Foissy, \emph{Les alg\`ebres de {H}opf des arbres enracin\'es d\'ecor\'es.
  {II}}, Bull. Sci. Math. \textbf{126} (2002), no.~4, 249--288.

\bibitem{MR1817703}
Alessandra Frabetti, \emph{Simplicial properties of the set of planar binary
  trees}, J. Algebraic Combin. \textbf{13} (2001), no.~1, 41--65.

\bibitem{MR1654173}
Jean-Louis Loday and Mar{\'{\i}}a~O. Ronco, \emph{Hopf algebra of the planar
  binary trees}, Adv. Math. \textbf{139} (1998), no.~2, 293--309.

\end{thebibliography}
\end{document}